\PassOptionsToPackage{unicode}{hyperref}
\PassOptionsToPackage{hyphens}{url}
\PassOptionsToPackage{dvipsnames,svgnames,x11names}{xcolor}
\documentclass[
  11pt]{article}

 \usepackage{bm, commath}
\usepackage{natbib}
\usepackage{caption}
\usepackage{graphicx}
\usepackage{subcaption}
\usepackage{amsmath, amsfonts, amsthm,amssymb}
\usepackage{float}
\usepackage{booktabs,siunitx}
\usepackage{url}
\usepackage{multirow}
\usepackage{amssymb}
\usepackage{bm}
\usepackage{mathrsfs}
\usepackage{enumitem}
\usepackage{threeparttable}
\usepackage{soul}
\allowdisplaybreaks

\usepackage{listings}
\usepackage{bbm}
\usepackage{textcomp}
\usepackage[ruled]{algorithm2e}
\SetKwInput{KwParam}{Parameter}
\SetAlgoCaptionLayout{centerline}

\usepackage[utf8]{inputenc}
\usepackage{float}
\usepackage{tikz}
\usepackage{bbding}
\usetikzlibrary{trees}
\usetikzlibrary{shapes,decorations,arrows,calc,arrows.meta,fit,positioning}
\usetikzlibrary{shapes.multipart}
\tikzset{
    -Latex,auto,node distance =1 cm and 1 cm,semithick,
    state/.style ={ellipse, draw, minimum width = 0.7 cm},
    state1/.style ={ draw, minimum width = 0.7 cm},
    point/.style = {circle, draw, inner sep=0.04cm,fill,node contents={}},
    bidirected/.style={Latex-Latex,dashed},
    el/.style = {inner sep=2pt, align=left, sloped}
}

\newcommand{\indep}{\rotatebox[origin=c]{90}{$\models$}}
\newtheorem{theorem}{Theorem}

\newtheorem{assumption}{Assumption}

\newtheorem*{assumptionone*}{Assumption 1$^*$}

\usepackage{mathtools}

\usepackage{xcolor}

\makeatletter
\renewcommand{\algocf@captiontext}[2]{#1\algocf@typo. \AlCapFnt{}#2} 

\def\bmX{\bm{X}}
\def\bmx{\bm{x}}

\makeatother

\usepackage{longtable,booktabs,array}
\usepackage{calc} 
\usepackage{etoolbox}
\makeatletter
\patchcmd\longtable{\par}{\if@noskipsec\mbox{}\fi\par}{}{}
\makeatother
\IfFileExists{footnotehyper.sty}{\usepackage{footnotehyper}}{\usepackage{footnote}}
\makesavenoteenv{longtable}
\usepackage{graphicx}
\makeatletter
\def\maxwidth{\ifdim\Gin@nat@width>\linewidth\linewidth\else\Gin@nat@width\fi}
\def\maxheight{\ifdim\Gin@nat@height>\textheight\textheight\else\Gin@nat@height\fi}
\makeatother
\setkeys{Gin}{width=\maxwidth,height=\maxheight,keepaspectratio}
\makeatletter
\def\fps@figure{htbp}
\makeatother

\makeatletter
\@ifpackageloaded{caption}{}{\usepackage{caption}}
\AtBeginDocument{%
\ifdefined\contentsname
  \renewcommand*\contentsname{Table of contents}
\else
  \newcommand\contentsname{Table of contents}
\fi
\ifdefined\listfigurename
  \renewcommand*\listfigurename{List of Figures}
\else
  \newcommand\listfigurename{List of Figures}
\fi
\ifdefined\listtablename
  \renewcommand*\listtablename{List of Tables}
\else
  \newcommand\listtablename{List of Tables}
\fi
\ifdefined\figurename
  \renewcommand*\figurename{Figure}
\else
  \newcommand\figurename{Figure}
\fi
\ifdefined\tablename
  \renewcommand*\tablename{Table}
\else
  \newcommand\tablename{Table}
\fi
}
\@ifpackageloaded{float}{}{\usepackage{float}}
\floatstyle{ruled}
\@ifundefined{c@chapter}{\newfloat{codelisting}{h}{lop}}{\newfloat{codelisting}{h}{lop}[chapter]}
\floatname{codelisting}{Listing}

\makeatother
\makeatletter
\@ifpackageloaded{caption}{}{\usepackage{caption}}
\@ifpackageloaded{subcaption}{}{\usepackage{subcaption}}
\makeatother

\ifLuaTeX
  \usepackage{selnolig}  
\fi
\usepackage[]{natbib}
\usepackage{bookmark}

\IfFileExists{xurl.sty}{\usepackage{xurl}}{} 
\hypersetup{
  pdftitle={Title},
  pdfauthor={Author 1; Author 2},
  pdfkeywords={3 to 6 keywords, that do not appear in the title},
  colorlinks=true,
  linkcolor={blue},
  filecolor={Maroon},
  citecolor={Blue},
  urlcolor={Blue},
  pdfcreator={LaTeX via pandoc}}

\newcommand{\anon}{1}

\usepackage{xr}

\begin{document}

\def\spacingset#1{\renewcommand{\baselinestretch}%
{#1}\small\normalsize} \spacingset{1}


\if1\anon
{
  \title{\bf Proximal Learning for Trials With External Controls: A Case Study in HIV Prevention}
  \author{Yilin Song$^1$, Yinxiang Wu$^2$, Raphael J. Landovitz$^3$, Susan Buchbinder$^4$, \\
  Srilatha Edupuganti$^5$, Lydia Soto-Torres$^6$, Kendrick Li$^7$, Xu Shi$^8$, Fei Gao$^9$, \\
  Deborah Donnell$^9$,
  Holly Janes$^8$, and Ting Ye$^2$\thanks{Correspond to tingye1@uw.edu.}\vspace{.4cm}\\ 
    $^1$Department of Biostatistics, Columbia University \\
    $^2$Department of Biostatistics, University of Washington \\
    $^3$David Geffen School of Medicine, University of California, Los Angeles\\
    $^4$Bridge HIV, San Francisco Department of Public Health\\
    $^5$Department of Medicine, Emory University\\
    $^6$Division of AIDS, National Institute of Allergy and Infectious Diseases\\
    $^7$St. Jude Children's Research Hospital\\
    $^8$Department of Biostatistics, University of Michigan\\
    $^9$Vaccine and Infectious Disease Division, \\Fred Hutchinson Cancer Research Center
    }
    \date{}
  \maketitle
} \fi

\if0\anon
{
  \bigskip
  \bigskip
  \bigskip
  \begin{center}
    {\LARGE \bf Proximal Learning for Trials With External Controls: A Case Study in HIV Prevention}
\end{center}
  \medskip
} \fi

\bigskip
\vspace{-1.2cm}
\begin{abstract}
With the advent of effective pre-exposure prophylaxis agents, active-controlled HIV prevention trials have become a common study design. Nevertheless, estimating absolute efficacy relative to a placebo remains important. In this paper, we introduce a novel application of proximal causal inference methods to estimate the counterfactual cumulative HIV incidence under placebo for participants in an active-controlled trial of cabotegravir, using external control data from a placebo-controlled trial with similar eligibility criteria. We leverage baseline sexually transmitted infection status and geographic region as negative control outcome and exposure variables, respectively. We address two key challenges: unmeasured differences in HIV risk between trials and statistical difficulties arising from low HIV incidence rates in both studies.  To overcome these challenges, we develop two proximal inference approaches: (1) a semiparametric inverse probability of censoring weighting estimator, and (2) a two-stage regression-based strategy tailored to low-event-rate settings. Our theoretical and numerical investigations demonstrate these methods yield reliable estimates of the counterfactual one-year cumulative HIV incidence under placebo, and provide robust evidence of the superior efficacy of cabotegravir compared with placebo. These findings highlight the potential of proximal inference methods to estimate placebo-controlled effects in both single-arm and active-controlled trials by leveraging external controls. 
\end{abstract}

\vspace{-0.2cm}
\noindent%
{\it Keywords:} active-controlled trials, causal inference, censoring, data integration
\vfill

\newpage
\spacingset{1.8} 

\section{HIV Prevention Trials}
\vspace{-0.5cm}\subsection{HPTN 083 as the primary trial of interest}\vspace{-0.3cm}

HPTN 083 (ClinicalTrials.gov: NCT02720094) is a Phase 2b/3 randomized clinical trial evaluating the safety and efficacy of long-acting injectable cabotegravir compared to daily oral tenofovir disoproxil fumarate/emtricitabine (TDF/FTC),  the first biomedical agent proven effective for HIV prevention through pre-exposure prophylaxis (PrEP) \citep{HPTN083}.
The trial enrolled participants from December 2016 through March 2020 across the United States, Latin America, Asia, and Africa. The intention-to-treat (ITT) population included 4,566 cisgender men and transgender women who have sex with men, with 2,282 participants randomized to the cabotegravir arm and 2,284 to the TDF/FTC arm.

In the ITT population, five participants (two in the cabotegravir arm and three in the TDF/FTC arm) were retrospectively found to have HIV infection at baseline, and 71 participants had no follow-up visits after enrollment. All of these participants were excluded from the primary efficacy analysis. The trial demonstrated the superiority of cabotegravir over TDF/FTC in preventing HIV infection. The one-year cumulative HIV incidence was 0.41\% (95\% CI: 0.20 to 0.70) in the cabotegravir group compared to 1.22\% (95\% CI: 0.90 to 1.70) in the TDF/FTC group.

\vspace{-0.5cm}\subsection{Estimating counterfactual HIV incidence under placebo}\vspace{-0.3cm}

Active-controlled HIV prevention trials, such as HPTN 083, have become a common design following the availability of effective PrEP agents like TDF/FTC \citep{DISCOVER}. Given the well-established efficacy of these agents, conducting placebo-controlled trials that withhold proven interventions requires careful ethical justification \citep{unaids-ethics}. Furthermore, the rapid evolution of HIV prevention strategies presents additional challenges. Participants assigned to a placebo arm {must be} offered prevention alternatives within the trial, and many may also seek other prevention options outside the trial, making it difficult to isolate the effect of the investigational agent.

Despite the absence of a concurrent placebo group, estimating the absolute efficacy of a new intervention remains highly relevant. {This involves comparing the observed outcome under a new intervention to the outcome expected under a clearly-defined placebo condition -- the \emph{counterfactual placebo incidence} \citep{glidden2019advancing, glidden2020bayesian, temple2000placebo}.} There are two primary motivations for such estimation. First, it enables patients and clinicians to make informed decisions by providing a clear understanding of the true level of protection offered by different interventions. Second, as PrEP agents become more effective and HIV incidence continues to decline, conducting active-controlled trials with HIV incidence as the primary endpoint becomes increasingly impractical and/or infeasible.

{In the context of HIV prevention, several study design strategies have been proposed to estimate the absolute efficacy of an intervention in the absence of a placebo arm. \cite{donnell2024studydesign} described six potential study design strategies as well as their strengths and limitations when using data from: (1) registrational cohorts, (2) recency assays, (3) {historical} external placebo arms, (4) HIV incidence biomarkers, (5) drug concentrations, and (6) immune biomarkers. Several of these strategies have been implemented in actual trials. For example, the PrEPVacc study estimates counterfactual placebo incidence using a pre-trial registration cohort \citep{PrEPVaccNCT04066881, avertedinfectionsratio, PrEPVaccKansiime2025}. The recency assay approach estimates baseline HIV incidence based on the proportion of recent infections among individuals screened for trial participation in cross-sectional samples; this method has been applied in Gilead's PURPOSE1 and PURPOSE2 trials \citep{PURPOSE1, PURPOSE2}. \cite{Donnell2023} employed the external placebo arm approach, using direct standardization to adjust for observed covariates. Considerations of the external placebo arms approach have also been discussed in the draft guidance from the US Food and Drug Administration (FDA) \citep{fda2023externallycontrolled}. 
The HIV incidence biomarkers approach, exemplified by \cite{Mullick2020} and \cite{Zhu2024counterfactual}, involves fitting a model {on historical data linking study-level gonorrhea and HIV incidence rates and to predict counterfactual placebo HIV incidence in the current primary study. This approach is complicated by the power of antiretroviral therapy as prevention in a population, potentially unlinking sexually transmitted infection (STI) transmission from HIV transmission, and is now infeasible for some populations because of the demonstrated efficacy of doxycycline post-exposure prophylaxis.} Researchers have also explored other active-controlled trial designs \citep{hollydesignreview, gao2025design}. Together, these design strategies provide an expanding array of options for estimating counterfactual placebo incidence and designing HIV prevention trials when including a placebo arm is not feasible. }



In this study, we focus on estimating the counterfactual 
cumulative HIV incidence under placebo for participants in the HPTN 083 trial, drawing on external control data from a placebo-controlled study. By comparing these counterfactual estimates with the observed cumulative HIV incidence in the cabotegravir arm of HPTN 083, we obtain estimates of absolute efficacy. However, this approach presents methodological challenges that require further statistical development.

\vspace{-0.5cm}\subsection{HVTN704/HPTN 085 as an external control dataset}\vspace{-0.3cm}\label{amp_external_intro}

The Antibody Mediated Prevention (AMP) study (also known as HVTN 704/HPTN 085, ClinicalTrials.gov: NCT02716675) was conducted in the United States, Peru, Brazil, Switzerland from April 2016 through October 2018. The AMP study had similar eligibility criteria to HPTN 083 and enrolled at-risk cisgender men and transgender individuals to evaluate the safety, tolerability, and efficacy of the passively-infused VRC01 broadly neutralizing antibody (bnAb) for HIV prevention.

The AMP study included three arms: (1) high-dose arm receiving VRC01 at 30 mg/kg, (2) low-dose arm receiving VRC01 at 10 mg/kg, and (3) placebo arm with sterile saline infusions. HIV cumulative incidence through 1 year (\%) was 2.5 (95\% CI: [1.71, 3.53]), 2.2 (95\% CI: [1.46, 3.18]), and 2.98 (95\% CI: [2.11, 4.09]) in the low-dose, high-dose, and placebo groups, respectively.  We used the placebo arm as our negative control data and excluded $n=11$ participants who had no follow-up visits after randomization. Details of the data processing steps are provided in Figure \ref{fig:flow}. 

\begin{figure}[!htbp]
    \centering
    \includegraphics[width=\linewidth]{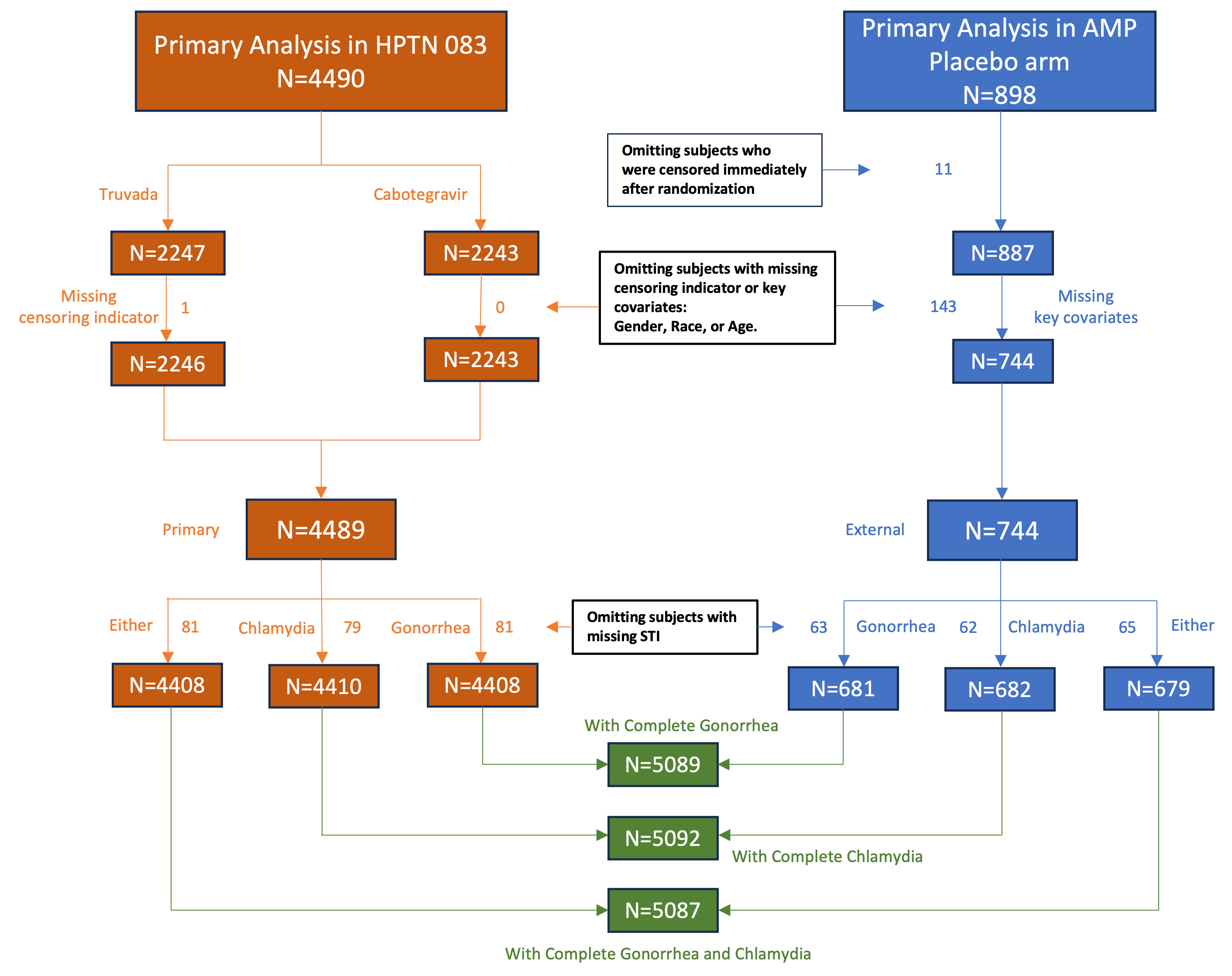}
    \caption{Data processing workflow. We first excluded observations with missing key covariates, including gender, race, and age. We then constructed three separate analytic datasets by excluding individuals with missing baseline gonorrhea, missing baseline chlamydia, or missing values for either gonorrhea or chlamydia, respectively. Baseline STI measures were used as negative control outcomes.}
    \label{fig:flow}
\end{figure}

It is important to clearly define the placebo condition in the AMP study in order to properly interpret the counterfactual placebo incidence we are estimating for HPTN 083. In AMP, follow-up data showed that 39.0\% (95\% CI: [35.9, 42.0]) of person-years had {detectable\footnote{Defined as detection at or above the lower limit of quantification}} TDF/FTC and 28.9\% (95\% CI: [26.4, 31.6]) had {effective\footnote{Defined as detection of at least 700 fmol/punch}} TDF/FTC, with these rates evenly distributed across treatment groups. This illustrates the reality in modern HIV prevention trials, where few placebo-controlled studies are conducted without concomitant use of biomedical prevention. {In comparison, in the HPTN 083 TDF/FTC arm, follow-up data showed the TDF/FTC concentrations consistent
with receipt of at least four TDF-FTC
doses per week were detected in 72.3\% of the samples.} Therefore, the placebo condition in AMP reflects the use of additional biomedical prevention, which is arguably a clinically relevant definition of placebo in this setting.

To assess the comparability of the two trials, we examined baseline characteristics of participants in HPTN 083 and in the placebo arm of AMP (Table \ref{table:table1}). Because variables measured in different trials are rarely identical, even when conducted within the same network and during similar time periods, we harmonized variable definitions through discussions with the protocol teams of each trial. { As shown in Table \ref{table:table1}, the placebo arm of AMP included a higher proportion of White participants and a greater percentage of individuals above 30 years of age. In contrast, HPTN 083 enrolled more Black participants, and {more cisgender men}. The prevalence of baseline STIs, including rectal gonorrhea and chlamydia, was also higher in HPTN 083. These variables are important to adjust for in the analysis because they may be associated with the risk of HIV infection \citep{race-HIV, STI-HIV}.  }

This comparison highlights a central challenge in using external controls. Systematic differences between trial populations can introduce confounding bias, and even with rigorous adjustment for observed covariates, residual confounding from unmeasured variables may persist. For example, {regional HIV prevalence,} viral load among partners, and local sexual network density are all important contextual factors influencing individual risk of infection, yet they are difficult to measure and were not recorded in either study. Because the two trials recruited participants from different geographic regions, unmeasured differences in local HIV transmission dynamics may contribute to between-study differences in HIV incidence under placebo that are not fully explained by measured variables.

\begin{table}[!htbp]
\centering
\caption{Baseline characteristics of participants in HPTN 083 and the AMP placebo arm. Demographics were collected via baseline questionnaires. Baseline gonorrhea and chlamydia infections were determined through laboratory testing of rectal swabs.
}
\label{table:table1}        
\resizebox{0.8\textwidth}{!}{
\begin{tabular}{lcc}
\toprule
\vspace{-0.2em}
 & HPTN 083 ($n = 4490$) & AMP Placebo ($n = 898$) \\
\midrule
Age, $n$ (\%) & & \vspace{-0.7em}\\
\hspace{1em} 18-20 years & 634 (14.1) & 95 (10.6) \vspace{-0.8em}\\
\hspace{1em} 21-30 years & 2557 (56.9) & 493 (54.9) \vspace{-0.8em}\\
\hspace{1em} 31-40 years & 907 (20.2) & 230 (25.6) \vspace{-0.8em}\\
\hspace{1em} $\geq$ 40 years & 392 (8.7) & 80 (8.9) \vspace{-0.4em}\\ 

Race, $n$ (\%) & & \vspace{-0.7em}\\
\hspace{1em} White & 1231 (27.4) & 285 (31.7) \vspace{-0.8em}\\
\hspace{1em} Black & 1074 (23.9) & 131 (14.6) \vspace{-0.8em}\\
\hspace{1em} Other & 2185 (48.7) & 482 (53.7) \vspace{-0.4em}\\

Gender, $n$ (\%) & & \vspace{-0.7em}\\
\hspace{1em} Female or Trans Female & 480 (10.7) & 33 (3.7) \vspace{-0.8em}\\
\hspace{1em} Male & 3899 (86.8) & 688 (76.6) \vspace{-0.8em}\\
\hspace{1em} Other & 111 (2.5) & 29 (3.2) \vspace{-0.8em}\\
\hspace{1em} Missing & 0 (0.0) & 148 (16.5) \vspace{-0.4em}\\

Region, $n$ (\%) & & \vspace{-0.7em}\\
\hspace{1em} Non-Latin America & 2562 (57.1) & 470 (52.3)\vspace{-0.8em}\\
\hspace{1em} Latin America & 1928 (42.9) & 428 (47.6) \vspace{-0.4em}\\

Ethnicity, $n$ (\%) & & \vspace{-0.7em}\\
\hspace{1em} Hispanic/Latino(a) & 2065 (46.0) & 529 (58.9) \vspace{-0.8em}\\
\hspace{1em} Non-Hispanic/Latino(a) & 2424 (54.0) & 369 (41.1) \vspace{-0.8em}\\
\hspace{1em} Missing & 1 (0.0) & 0 (0.0) \vspace{-0.4em}\\

Education, $n$ (\%) & & \vspace{-0.7em}\\
\hspace{1em} College or above & 3431 (76.4) & 575 (64.0) \vspace{-0.8em}\\
\hspace{1em} High school or below & 1059 (23.5) & 175 (19.5) \vspace{-0.8em}\\
\hspace{1em} Missing & 0 (0.0) & 148 (16.5) \vspace{-0.4em}\\





Rectal Gonorrhea  (\%) & & \vspace{-0.7em}\\
\hspace{1em}Yes & 293 (6.5) & 33 (3.7) \vspace{-0.8em}\\
\hspace{1em} No & 4116 (91.7) & 654 (72.8) \vspace{-0.8em}\\
\hspace{1em} Missing   & 81 (1.8) & 211 (23.5) \vspace{-0.4em}\\
           
Rectal Chlamydia  (\%) & & \vspace{-0.7em}\\
\hspace{1em} Yes & 495 (11.0) & 65 (7.2)\vspace{-0.8em}\\
\hspace{1em} No & 3916 (87.2) & 623 (69.4) \vspace{-0.8em}\\
\hspace{1em} Missing   & 79 (1.8) & 210 (23.4) \vspace{-0.4em}\\

Rectal Gonorrhea or Chlamydia  (\%) & & \vspace{-0.7em}\\
\hspace{1em} Yes & 679 (15.1) & 82 (9.1)\vspace{-0.8em}\\
\hspace{1em} No & 3730 (83.1) & 603 (67.1)\vspace{-0.8em}\\
\hspace{1em} Missing   & 81 (1.8) & 213 (23.7) \\\bottomrule
\end{tabular}
}
\end{table}

\vspace{-0.5cm}\subsection{Gaps in existing statistical methods for external controls}\vspace{-0.3cm}

General criteria for evaluating the acceptability of historical or external controls were outlined by \cite{POCOCK1976175}. These criteria emphasize the importance of ensuring comparability in patient populations, study design, and trial conduct between the primary trial and the external data source. These foundational principles provide the basis for subsequent statistical approaches used to incorporate external controls.

A variety of methods have been developed to incorporate external control data under the assumption of no unmeasured confounding. This assumption requires that participants in the primary and external datasets be exchangeable at each level of the observed variables -- a strong assumption that cannot be empirically tested.
When this assumption holds, researchers can apply standard causal inference methods, including propensity score matching \citep{rubin1974estimating, dehejiaPSM, li2004PSM, liPSMreview, chenPSM, StuartAndRubinMatching}, inverse probability weighting \citep{Joffe2004,Chesnaye2021, syntheticreview}, G-computation \citep{snowdenGcomp, robins1986gform, hernan2020causal}, and double/debiased machine learning {and semi-parametric theory} \citep{ChernozhukovDML, belloni2017program, liu2025targeted}. Additionally, several methods have been developed specifically for external control settings \citep{liefficiencyexternalcontrol, valancius2024causal}. All of these approaches use observed covariates to enhance comparability between the primary and external populations and to adjust for potential confounding.

{Unmeasured confounders (e.g., local HIV prevalence, HIV viral load of sexual partners, local sexual network density, local behavior/cultural norms in HIV prevention) may influence both the risk of HIV infection and study participation, particularly since the two trials recruited participants from different geographic areas.}
In such cases, methods that rely on the assumption of no unmeasured confounding may produce biased estimates. When placebo data are available in both the primary and external datasets, researchers can formally assess comparability before combining data sources \citep{viele2014use, liu2022matching}. Sensitivity analysis frameworks, such as the one proposed by \cite{YiZhangDuYe+2023}, can help quantify the potential impact of unmeasured confounding.

When multiple historical or external trials are available, meta-analytic techniques can be used to model heterogeneity across studies and extrapolate findings to the target population \citep{networkmetaanalysis}. Bayesian borrowing methods offer a complementary strategy that allows the contribution of external data to be weighted according to its similarity to the primary trial population. Examples include power priors, elastic priors, and commensurate priors \citep{powerprior, elasticprior, commensurateprior, SCHMIDLImeta-prior}. However, Bayesian frameworks are often sensitive to model assumptions and may require substantial domain knowledge to guide prior specification.

Therefore, a gap remains in developing statistical methods that can address unmeasured confounding when leveraging external controls, particularly when the primary trial lacks a placebo arm. {Low HIV incidence rates may create additional statistical challenges, including reduced precision from the limited number of observed events and instability in estimation, where both the point estimates and their confidence intervals can fall outside the plausible range, e.g. $(0,100\%)$ for HIV cumulative incidence.} Finally, low HIV incidence rates may limit the power of the standard statistical analysis and create additional statistical challenges, which calls for most efficient methods

\vspace{-0.5cm}\subsection{Review of proximal inference methods}\vspace{-0.3cm}

Proximal causal inference methods were developed to address unmeasured confounding in observational studies by leveraging both negative control exposures (NCEs) and negative control outcomes (NCOs) \citep{MiaoProximal,shi2020multiply}. An NCO is a variable that is not causally affected by the treatment of interest, and an NCE is a variable that does not causally affect the outcome of interest. Importantly, both NCOs and NCEs must be associated with the unmeasured confounders that affect the primary exposure-outcome relationship.

Key foundational and review articles in this area include \cite{tchetgen2020introduction}, \cite{shi2020multiply}, and \cite{shi2020selective}. Building on this foundational work, \cite{cui2020semiparametric} proposed a semiparametric framework for identifying and estimating both the average treatment effect (ATE) and the average treatment effect on the treated (ATT). \cite{yingproximal} extended the estimation of ATE to time-to-event outcomes. More recently, researchers have developed regression-based approaches to proximal inference that employ generalized linear models for outcome modeling \citep{liu2024regression, li2024regression}.

Most relevant to our study, \cite{suproximalindirect} developed a proximal indirect comparison framework for settings where the treatment of interest is unavailable in the primary RCT but available in an external RCT. Their approach uses proxy variables and bridge functions to adjust for unmeasured, shifted effect modifiers between datasets, yielding a doubly robust and asymptotically normal estimator. However, their identification strategy requires that the two trials share a common third treatment arm, which is not the case in our study. 

\vspace{-0.5cm}
\subsection{Method development and HPTN 083 application}\vspace{-0.3cm}

The rich set of baseline covariates collected in HPTN 083 and the AMP study enables the identification of plausible negative control variables for proximal inference. {We define local HIV transmission environment risk as a collective measure of local HIV prevalence, viral load among partners, local sexual network density, and behavior/cultural norms in HIV prevention. It} is the primary unmeasured confounder of concern in comparing across trials, and our negative control variables serve as proxies for this factor. Baseline diagnoses of STIs, such as gonorrhea and chlamydia, serve as appropriate negative control outcomes because they share key behavioral and biological risk factors with HIV acquisition and are strongly correlated with {local HIV transmission environment risk}. {Moreover, STIs are not expected to directly affect which study an individual participates in, except through their association with shared confounders -- {STI eligibility criteria were similar in both trials, STIs did not affect trial catchment area selection}
and typically do not directly influence willingness to participate after adjusting for other covariates.} Geographic region is a reasonable negative control exposure because it correlates with the {local HIV transmission environment risk} but does not directly influence individual’s HIV risk. The availability of these strong proxy variables supports the application of proximal inference methods in this setting (see Section \ref{sec:notation} and Figure \ref{fig1}).

Our goal is to estimate the counterfactual cumulative HIV incidence under placebo in the HPTN 083 trial using external control data from the AMP study. To address unmeasured differences in HIV risk between the two trials, we extend the semiparametric proximal inference methods of \cite{cui2020semiparametric} and \cite{yingproximal} to handle external control integration with time-to-event outcomes. Additionally, to overcome the statistical challenges posed by low HIV incidence in both studies, we develop a novel two-stage regression-based proximal inference approach that is based on the Cox proportional hazards model.

The remainder of this paper is organized as follows. In Section \ref{sec:methods}, we introduce the notation and setup in Section \ref{sec:notation}, followed by a detailed description of the proposed proximal identification and inference procedures in Section \ref{sec:method}. In Section \ref{sec:realdata}, we present our analysis of the HPTN 083 dataset and discuss the results of the statistical inference. Finally, we conclude with a discussion in Section \ref{sec:discussion}.

\vspace{-0.5cm}\section{Methods}\label{sec:methods}
\vspace{-0.5cm}\subsection{Notation and setup}\label{sec:notation}\vspace{-0.3cm}
We denote the primary dataset with $J$ active treatment arms as $R = 0$ and the external control dataset as $R = 1$. Let $A$ represent the treatment assignment, where $A = 0$ indicates placebo and $A = 1, \dots, J$ represent active treatment arms. Unless otherwise specified, treatment assignment is partially determined by data source: specifically, $R = 1$ corresponds to the placebo arm ($A = 0$), and $R = 0$ corresponds to active treatments ($A = 1, \dots, J$). In our application, the primary dataset (HPTN 083) includes $J = 2$ active treatment arms: cabotegravir ($A = 1$) and TDF/FTC ($A = 2$).  Let $T = \min(T^*, C)$ denote the observed follow-up time, where $T^*$ is the time to HIV infection and $C$ is the censoring time. We define the event indicator as $\Delta = I(T^* \leq C)$, with $\Delta = 1$ indicating observed HIV infection and $\Delta = 0$ indicating right censoring. 
We use $\bm{X}$ to denote a vector of baseline covariates. {In addition, we observe a negative control outcome $W$ at baseline in both the primary and external datasets. Furthermore, a negative control exposure $Z$ is observed for the two-stage regression-based approach in both datasets but not required in the primary dataset for the semiparametric approaches. For each individual in the combined dataset of size $n$, we have $(R, T, \Delta, \bm{X}, Z, W)$.}

Let $T^*(a)$ denote the counterfactual time to HIV infection under treatment arm $a \in \{0, 1, \dots, J\}$.  We assume the Stable Unit Treatment Value Assumption (SUTVA) \citep{Rubin1980}, which implies the standard consistency assumption $T^*=T^*(A)$. Our primary target estimand is the counterfactual placebo risk in HPTN 083, defined as the probability of HIV infection by time $t$ under placebo: $P(T^*(0)\leq t\mid R=0)$. When $t$ corresponds to one year, this can also be interpreted as the one-year cumulative HIV incidence. We are also interested in the average treatment effect among the population in HPTN 083, defined by the contrast $P(T^*(a)\leq t\mid R=0) -P(T^*(0)\leq t \mid R=0) $ for $a=1,\cdots,J$. Under randomization within HPTN 083, the treated potential outcomes are identifiable, so $ P(T^*(a)\leq t\mid R=0) =P(T^*\leq t\mid R=0, A=a)$ for $a=1,\dots, J$. 

To address potential bias due to unmeasured confounding, we posit the existence of an unmeasured variable $U$ such that primary and external participants become comparable conditional on both $U$ and $\bmX$ 
\citep{rosenbaum1983sens,rosenbaum1983central, Imbens2003}. The presence of $U$ implies that comparability cannot be achieved by adjusting for $\bmX$ alone. For example, participants in HPTN 083 and the AMP study may differ in {local HIV transmission environment risk}, which affects HIV risk but is often not measured with sufficient spatial granularity or temporal alignment with the trial periods. As a result, the counterfactual outcome distributions under placebo may differ between datasets even after conditioning on $\bmX$: $ P(T^*(0)\leq t\mid \bmX, R=0) \neq P(T^*(0)\leq t\mid \bmX, R=1) $. This lack of exchangeability motivates the need for methods that can account for unmeasured confounding, such as proximal inference. 

Proximal inference leverages a negative control exposure $ Z $ and a negative control outcome $ W $ that satisfy the following key assumption \citep{tchetgen2020introduction}: 

\begin{assumption}[Negative controls] \label{assump: negative controls}
 	$ (Z, R)  \indep (T^*(0), W) \mid \bmX, U$. 
\end{assumption}

Assumption \ref{assump: negative controls} states that there exists an unmeasured confounder $U$ such that, conditional on both $U$ and the observed covariates $\bmX$, participants from the primary and external trials are comparable \citep{rosenbaum1983sens,rosenbaum1983central, Imbens2003}. Additionally, this assumption requires that the negative control exposure $Z$ does not have a direct effect on either the primary outcome $T^*(0)$ or the negative control outcome $W$, and that $W$ does not directly influence trial participation $R$. These conditions allow $Z$ and $W$ to serve as valid proxies for the unmeasured confounder $U$ in the proximal inference framework.

 \begin{figure}[!htbp]
 		\centering
 		\begin{tikzpicture}
 			\node[rectangle, draw, align=center] (r) at (0,0) {$R$\\ [-1em] {\footnotesize (study index)}};
 			\node[rectangle, draw, align=center] (z) [left =of r] {$Z$ \\ [-1em]{\footnotesize (NCE: coarse region)} };
 			\node[rectangle, draw, align=center] (y) [right =of r] {$T^*(0)$\\ [-1em] {\footnotesize (event time under placebo)}};
 			\node[rectangle, draw, align=center] (w) [right =of y] {$W$ \\[-1em] {(\footnotesize NCO: STI infection)}};
 			\node[rectangle, draw, align=center] (u) [above =of y,yshift=2cm] {$U$ ({\footnotesize unobserved: local HIV transmission environment risk}),\\[-0.5em] $\bmX$ {\footnotesize (observed: age, race, gender)}};
                \node[state, text width=2.5cm,align=center] (m2) [below =of u, xshift=-3.5cm]{local region};
 			\path (u) edge node[above]{} (y);
 			\path (u) edge node[above]{} (w);
 			\path (m2) edge node[above]{} (r);
 			\path (m2) edge node[above]{} (u);
                \path (m2) edge node[above]{} (z);
 		\end{tikzpicture} 
 	\caption{Causal DAG depicting the assumed causal relationships underpinning Assumption \ref{assump: negative controls}. The unmeasured confounder $U$, such as {local HIV transmission environment risk}, and observed covariates $\bmX$ jointly affect the potential HIV outcome under placebo $T^*(0)$, the negative control outcome $W$ (e.g., STI status), and the study indicator $R$. The coarse regional variable $Z$ (e.g., Latin America vs. non-Latin America) serves as a proxy for $U$ through its dependence on local region. Under this structure, $Z$ and $W$ satisfy the conditions for valid negative control exposure and outcome, respectively.  \label{fig1}}
 \end{figure}
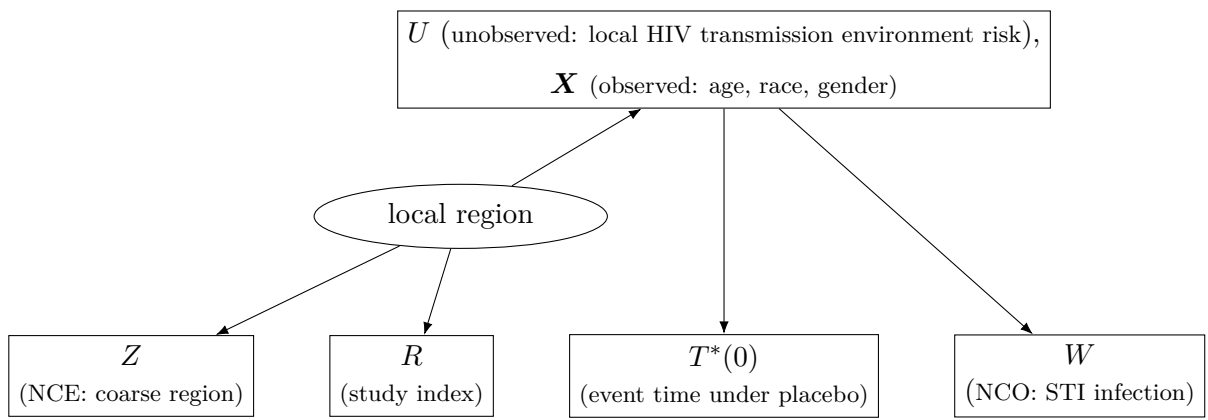

In our application, the assumed causal relationships among the variables are depicted in the directed acyclic graph (DAG) in Figure \ref{fig1}. A key unmeasured confounder $U$ is the {local HIV transmission environment risk} at a granular geographic level. This granular location variable (labeled “local region” in the DAG) influences both the coarse regional variable $Z$ (e.g., Latin America vs. non-Latin America) and the unmeasured HIV prevalence $U$, making $Z$ a proxy for $U$. Geographic location also determines study accessibility, as enrollment is often site-specific.  Thus, the unmeasured HIV prevalence $U$ is associated with study assignment $R$ through geographic region and also directly affects an individual’s risk of HIV infection, making it a source of unmeasured confounding. Importantly, the coarse regional variable $Z$ does not directly influence HIV risk beyond its association through $U$ and covariates $\bmX$, which include individual-level HIV risk factors. Furthermore, the negative control outcome $W$ (e.g. baseline STI infections) can share similar risk factors ($U,\bmX$) as HIV infections, but it does not directly influence the study index $R$ because it does not affect trial catchment area selection, is not part of either trial's inclusion-exclusion criteria, and typically does not directly influence willingness to participate after adjusting for other covariates.

\vspace{-0.5cm}\subsection{A proximal causal inference approach}\label{sec:method}\vspace{-0.3cm}
\subsubsection{Method 1: Semiparametric IPCW estimator}\label{sec:semiparametric}\vspace{-0.3cm}

In this section, we develop a semiparametric identification and estimation strategy based on inverse probability of censoring weighting (IPCW) within the proximal causal inference framework. We begin by introducing two key assumptions -- positivity and completeness -- that are commonly used in the proximal inference literature \citep{MiaoProximal,  cui2020semiparametric} and can be readily adapted to our setting.

\begin{assumption}[Positivity] \label{assump: positivity}
	For some constant $c>0$, $c< P(R=1\mid \bmX, U)<1-c$ {holds $P_{U,\bm X}(\cdot|R=0)$-almost surely.} 
\end{assumption}


This condition ensures that, for every $(\bmX,U)$ combination with positive probability in the primary dataset $(R=0)$, there is a positive probability of observing that $(\bmX,U)$ combination in the external dataset $(R=1)$. {Note that the covariates $\bmX$ should be harmonized as needed (e.g., via re-categorization) to ensure that all levels of $\bmX$ in the primary dataset should also exist in the external dataset.} 


\begin{assumption}[Completeness] \label{assump: completeness} {For any square integrable function $g$ and for all $\bmx$:\\
(a) $g(u)=0 $ for all $u$ whenever $E[g(U)\mid Z=z, R=1, \bmX = \bmx] =0$ for all $z$. \\
(b) $g(u)=0 $ for all $u$ whenever $E[g(U)\mid W=w, R=1, \bmX = \bmx] =0$ for all $w$.}

\end{assumption}

{The validity of the proximal inference approach crucially relies on whether the identified negative control variables are sufficiently informative about the unmeasured confounder. Assumption \ref{assump: completeness} provides a sufficient condition to ensure identification in Theorem \ref{theo: censoring}. This assumption can be easier to interpret when $ U, Z$, and $W $ are categorical. In this case, Assumption \ref{assump: completeness} requires that both $ Z $ and $ W $ have at least as many categories as $ U $, and that variation in $U$ can be
recovered from variation in $Z$ and $W$. More specifically, within $R=1$ and for every possible value of $\bmx$, different values of $U$ must correspond to distinct distributions of both $W$ and $Z$
\citep{MiaoProximal,shi2020selective}. When $U$ is continuous, Assumption \ref{assump: completeness} requires that within the external dataset and for any level of $\bmX$, no function of $U$ can be orthogonal to the information provided by $W$ and $Z$ \cite{olivas2025proximal}. In other words, $W$ and $Z$ need to be sufficiently predictive of $U$, $U^2$, $U^3$, and more generally, all possible variation in $U$. See \cite{MiaoProximal} for discussions of various parametric, semiparametric and non-parametric models that satisfy this assumption. For strategies to relax this assumption, we refer readers to \citet{bennett2022inference}.} 

To fully account for unmeasured confounding due to a continuous $U$ (e.g., {local HIV transmission environment risk}), the semi-parametric approach requires continuous negative control variables that are strongly associated with $U$. Such variables are currently lacking in our data, as both $Z$ (region) and $W$ (baseline STI infections) are binary. Therefore, we modify Assumption \ref{assump: negative controls} by assuming that controlling for a dichotomized version of {local HIV transmission environment risk} (high vs low) suffices to address unmeasured confounding.


\begin{assumptionone*}[Negative controls and binary unmeasured confounders]\label{assump:negative controls - semi}
$(Z, R)\indep (T^*(0), W)\mid \bmX, U_b$, where $U_b$ is a dichotomized version of $U$.
\end{assumptionone*} 

In Supplement \ref{supp:sensi_binU}, we analytically assess the magnitude of bias and present simulation results for scenarios where a binary $U_b$ and observed covariates $\bmX$ fail to fully adjust for confounding between $(Z, R)$ and $(T^*(0), W)$, thus violating Assumption \ref{assump: negative controls}$^*$. In this case, we find that stronger proxy variables $Z$ for the unmeasured confounder $U$ yield reduced bias and smaller standard errors.

{In addition, we account for right censoring since the primary endpoint is time to HIV infection. Censoring arises from two sources: (1) most cases are administratively censored at the study end where participants have not experienced the event by the end of the study period, and (2) the remainder reflect loss to follow-up (e.g., dropout). We assume censoring at random in the external study, allowing the censoring mechanism to depend on observed covariates $\bm X$ and the negative control exposure $Z$, but not on the unmeasured confounder $U$ or the negative control outcome $W$. This assumption is standard and is expected to hold approximately when $\bm X$ includes a sufficiently rich set of participant characteristics.

} 

\begin{assumption}[Censoring] \label{assump: censoring}
Let \( \lambda_C(t \mid \cdot) \) denote the conditional hazard function for censoring at time \( t \). We assume
	$\lambda_C(t\mid T>t, Z, \bmX, U, W, T^*, R=1)=\lambda_C(t\mid T>t, Z, \bmX, R=1).$
\end{assumption}

Theorem \ref{theo: censoring} presents the identification result for the proposed semiparametric IPCW method, extending the work of \cite{cui2020semiparametric} to account for right censoring.

\begin{theorem}[Semiparametric IPCW identification] \label{theo: censoring}
{Suppose Assumptions 1$^*$, \ref{assump: positivity}-\ref{assump: censoring}, and that there exist square-integrable functions $ h $ and $ q $ that satisfy the following integral equations almost surely:
 	\begin{align}
 		&E\left[\frac{\Delta I(T^*\leq t)}{P(C>T^*\mid Z, \bmX, R=1)}-h(W,\bmX)\mid Z,\bmX, R=1\right]=0 , \label{eq: h.c}\\
 		&\frac{P(R=0\mid W,\bmX)}{P(R=1\mid W,\bmX)}= E[ q(Z,\bmX) \mid W,\bmX, R=1]. \label{eq: q}
 	\end{align}
 Then, $P(T^*(0)\leq t\mid R=0)$ is nonparametrically identified and can be expressed in either of the following two forms: 
 	\begin{align}
 	 &P(T^*(0)\leq t\mid R=0)= E[ h(W,\bmX) \mid R=0], \label{eq:outcomebridge}\\
 	 &P(T^*(0)\leq t\mid R=0)= \frac{1}{P(R=0)}E \left[\frac{R {q}(Z,\bmX) \Delta I(T^*\leq t)}{P(C>T^*\mid Z, \bmX, R=1)}\right].\label{eq:propensityest}
   \end{align}
Furthermore, $P(T^*(0)\leq t\mid R=0)$ can be identified through the following augmented form:
        \begin{align}\label{eq:doublerobust}
 	 P(T^*(0)\leq t\mid R=0) & = \frac{1}{P(R=0)}  E \bigg[R q(Z,\bmX) \bigg
     \{\frac{\Delta I(T^*\leq t)}{P(C>T^*\mid Z, \bmX, R=1)} - h(W,\bmX)  \bigg\} \nonumber\\
     &\ \ \ +(1-R) h(W,X) \bigg],
   \end{align}
which remains valid as long as either $h$ satisfies \eqref{eq: h.c} or $q$ satisfies \eqref{eq: q}. 
}
 \end{theorem}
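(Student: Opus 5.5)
The plan is to reduce everything to the uncensored proximal identification argument of \cite{cui2020semiparametric}. First I remove censoring through an IPCW identity. Then I use Assumption 1$^*$ together with completeness (Assumption \ref{assump: completeness}, read with $U=U_b$) to move the bridge equations \eqref{eq: h.c}--\eqref{eq: q} from conditioning on the observed proxies to conditioning on the latent $(\bmX,U_b)$. Finally I average over the $R=0$ population. Throughout I write $G(s\mid Z,\bmX)=P(C>s\mid Z,\bmX,R=1)$ and $Y_t=I(T^*\le t)$. For $R=1$ we have $A=0$, so consistency gives $Y_t=I(T^*(0)\le t)$.

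\emph{Step 1 (IPCW).} By Assumption \ref{assump: censoring}, the censoring hazard given $(Z,\bmX,U_b,W,T^*,R=1)$ depends only on $(Z,\bmX)$. Hence $P(C>T^*\mid T^*,Z,\bmX,U_b,W,R=1)=G(T^*\mid Z,\bmX)$. Conditioning on $(T^*,Z,\bmX,U_b,W)$ inside $R=1$ and using positivity of $G$ on $[0,t]$, this gives
\begin{equation*}
E\Big[\tfrac{\Delta Y_t}{G(T^*\mid Z,\bmX)}\,\Big|\,Z,\bmX,U_b,W,R=1\Big]=P(T^*(0)\le t\mid Z,\bmX,U_b,W,R=1).
\end{equation*}
By Assumption 1$^*$ the right-hand side equals $m(\bmX,U_b):=P(T^*(0)\le t\mid \bmX,U_b)$, and the same quantity is obtained if $W$ is dropped from the conditioning set. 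So the IPCW-weighted outcome acts as an unbiased surrogate for $Y_t$ given $(Z,\bmX,U_b)$ in the external trial.

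\emph{Step 2 (outcome bridge, \eqref{eq:outcomebridge}).} Insert Step 1 into \eqref{eq: h.c} and iterate expectations over $U_b$ given $(Z,\bmX,R=1)$. Because $W\indep Z\mid \bmX,U_b,R=1$, this yields $E[g(U_b,\bmX)\mid Z,\bmX,R=1]=0$ with $g(u,\bmx)=m(\bmx,u)-E[h(W,\bmX)\mid \bmX=\bmx,U_b=u,R=1]$. Completeness (a) then gives $g\equiv 0$. Since $W\indep R\mid \bmX,U_b$, we have $E[h(W,\bmX)\mid \bmX,U_b,R=0]=m(\bmX,U_b)$. Likewise $R\indep T^*(0)\mid \bmX,U_b$ gives $m(\bmX,U_b)=P(T^*(0)\le t\mid \bmX,U_b,R=0)$. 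Averaging over $(\bmX,U_b)\mid R=0$ yields \eqref{eq:outcomebridge}. Positivity (Assumption \ref{assump: positivity}) guarantees that all $R=1$ conditional quantities are well defined on the $R=0$ support.

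\emph{Step 3 (treatment bridge, \eqref{eq:propensityest}).} Let $O(\bmX,U_b)=P(R=0\mid \bmX,U_b)/P(R=1\mid \bmX,U_b)$. Since $W\indep R\mid \bmX,U_b$, Bayes' rule gives $P(R=0\mid W,\bmX)/P(R=1\mid W,\bmX)=E[O(\bmX,U_b)\mid W,\bmX,R=1]$. Since $Z\indep W\mid \bmX,U_b,R=1$, the right-hand side of \eqref{eq: q} equals $E\{E[q(Z,\bmX)\mid \bmX,U_b,R=1]\mid W,\bmX,R=1\}$. Completeness (b) then gives $E[q(Z,\bmX)\mid \bmX,U_b,R=1]=O(\bmX,U_b)$. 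Now condition the right side of \eqref{eq:propensityest} on $(Z,\bmX,U_b,R=1)$ and apply Step 1. The integrand becomes $R\,q(Z,\bmX)\,m(\bmX,U_b)$, then $R\,O(\bmX,U_b)\,m(\bmX,U_b)$, whose expectation is $E[(1-R)\,m(\bmX,U_b)]=P(R=0)\,P(T^*(0)\le t\mid R=0)$.

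\emph{Step 4 (double robustness, \eqref{eq:doublerobust}).} If $h$ solves \eqref{eq: h.c}, condition the augmentation term on $(Z,\bmX,R=1)$. The first term vanishes for any $q$, and the remainder is \eqref{eq:outcomebridge}. If instead $q$ solves \eqref{eq: q}, the same argument as in Step 3, with the outcome replaced by $h(W,\bmX)$ and using $W\indep Z\mid \bmX,U_b,R=1$, shows $E[R\,q(Z,\bmX)\,h(W,\bmX)]=E[(1-R)\,h(W,\bmX)]$. These terms cancel, leaving \eqref{eq:propensityest}.

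The main obstacle is Step 3, together with the analogous lift in Step 2. Completeness is stated only within $R=1$, so the odds identity must be written entirely as a conditional expectation given $(W,\bmX,R=1)$. This requires the Bayes step that makes $U_b\mid W,\bmX,R=1$ the correct mixing distribution, and it uses both $W\indep R$ and $W\indep Z$ given $(\bmX,U_b)$. I would verify this step carefully first, with $U_b$ binary so that all integrals are finite sums.
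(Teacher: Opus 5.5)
Your proof follows the paper's route. You use an IPCW identity to remove censoring, and you lift \eqref{eq: h.c} and \eqref{eq: q} from the proxies to $(\bmX,U_b)$ via Assumption 1$^*$ and completeness, with the same odds/Bayes identity for the treatment bridge. You then average over $R=0$ using positivity and check double robustness by conditioning.

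One claim in Step 1 is false as written, although nothing later depends on it. With $W$ in the conditioning set, Assumption 1$^*$ gives only (by weak union) $(Z,R)\indep T^*(0)\mid \bmX,U_b,W$. Hence $P(T^*(0)\le t\mid Z,\bmX,U_b,W,R=1)=P(T^*(0)\le t\mid \bmX,U_b,W)$. This equals $m(\bmX,U_b)$ only under the extra condition $T^*(0)\indep W\mid \bmX,U_b$, which Assumption 1$^*$ does not impose. The negative control outcome may stay associated with $T^*(0)$ given $U_b$.

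Steps 2--4 only use the $W$-free identity $E[\Delta Y_t/G(T^*\mid Z,\bmX)\mid Z,\bmX,U_b,R=1]=m(\bmX,U_b)$. That identity is correct: average your display over $W$, or condition on $(T^*,Z,\bmX,U_b)$ directly, and use $(Z,R)\indep T^*(0)\mid \bmX,U_b$. So you should simply drop $W$ from Step 1. The paper in fact manages with the coarser identity given only $(Z,\bmX,R=1)$.

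There is also a minor difference in Step 4, in the case where $q$ solves \eqref{eq: q}. The paper obtains $E[R\,q(Z,\bmX)h(W,\bmX)]=E[(1-R)h(W,\bmX)]$ directly by conditioning on $(W,\bmX)$ and substituting \eqref{eq: q}. That needs no completeness and no latent factorization. Your detour through $E[q(Z,\bmX)\mid \bmX,U_b,R=1]=O(\bmX,U_b)$ and $W\indep Z\mid \bmX,U_b,R=1$ is valid but longer.
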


In the proximal inference literature, the functions $h$ and $q$ are referred to as the outcome bridge function and treatment bridge function, respectively. In Theorem \ref{theo: censoring}, we do not require $h$ and $q$ to be unique, which relaxes the assumptions in \cite{cui2020semiparametric}. Any pair of functions that satisfy \eqref{eq: h.c}--\eqref{eq: q} identify the same target parameter.

Directly solving equations \eqref{eq: h.c} and \eqref{eq: q} is challenging, as they are integral equations whose solutions may be non-unique and numerically unstable. How to properly address these challenges is still an active research area and we refer interested readers to \cite{kress1989linear,dikkala2020minimax,ghassami2022minimax}, and \cite{bennett2023source}. In the remainder of this section, we briefly describe our estimation procedures for $h$ and $q$ as implied by Theorem \ref{theo: censoring}, {assuming each takes a parametric form}. 

To estimate $h$, we first need to estimate the conditional survival probability $P(C>T^*\mid Z, \bmX, R=1)$. This can be done using the external dataset by fitting a Cox proportional hazards model for the censoring time, treating censoring as the event and including $Z,\bmX$ as covariates. {The estimated survival probability is: $
    \hat P(C>T^*|Z,\bmX, R=1) = \exp\left\{ -\hat H(T^*; Z,\bmX, R = 1)\right\}$, 
where $\hat H (T^*;\cdot)$ is the estimated cumulative hazard function up to time $T^*$. Note that when $\Delta=1$, we have the observed time $T=T^*$.} Next, we reformulate Equation \eqref{eq: h.c} as: 
\begin{align}
    E\left[g_h(Z,\bmX)\left[\frac{\Delta I(T^*\leq t)}{{P}(C>T^* \mid Z,\bmX, R=1)}-h(W,\bmX)\right]\mid R=1\right]=0 
    \label{eq: h.c marginal}
\end{align}
for any arbitrary function $g_h(Z,\bmX)$. We parameterize $h(W, \bmX)$ and select $g_h(Z, \bmX)$ as a set of transformations of $Z$ and $\bmX$ so that the number of moment equations in \eqref{eq: h.c marginal} matches the number of parameters in $h(W, \bmX)$. This ensures that the system of equations is just-identified and can be solved using the external dataset. 

A similar strategy is used to estimate $q$. By multiplying \eqref{eq: q} by an arbitrary function $g_q(W, \bmX)$, we obtain $$E\left[ g_q(W,\bmX) \left[q(Z,\bmX) - \frac{P(R=0\mid W,\bmX)}{P(R=1\mid W,\bmX)}\right]\mid R=1\right]=0,$$
where $P(R=1\mid W, X)$ can be estimated from a logistic regression. We parameterize $q(Z, \bmX)$ and choose $g_q(W, \bmX)$ as a set of transformations of $W$ and $\bmX$ to solve for the parameters in $q(Z, \bmX)$.

Once either $h$ or $q$ has been estimated, we apply the identification results from Theorem \ref{theo: censoring} to estimate $P(T^*(0)\leq t|R=0)$ via the outcome bridge form \eqref{eq:outcomebridge}, the treatment bridge form \eqref{eq:propensityest}, or the augmented (doubly robust) form \eqref{eq:doublerobust} that incorporates both bridge functions. { 
One caveat is that IPCW estimates may fall outside the (0,1) range when event rates are low or proxies are weak. Standard errors and confidence intervals can be obtained via nonparametric bootstrap or a sandwich variance estimator if a parametric censoring model (e.g., exponential distribution) is appropriate. For better finite-sample performance, we apply a $\log (-\log(\cdot))$ transformation, construct confidence intervals on the transformed scale, and back-transform to the original scale.}  

Supplement \ref{supp:equiv_hq} provides additional results on the semiparametric approach. We show that if both the outcome and treatment bridge functions are estimated nonparametrically -- and the treatment bridge is estimated by solving $E\left[ g_q(W,\bmX) \left[I(R=1)q(Z,\bmX) - I(R=0)\right]\right]=0$
instead of using Equation \eqref{eq: q} -- the resulting estimators from both bridge functions are identical.
However, this equivalence generally does not hold when parametric models are used. We also discuss the implications when either $Z$ or $W$ is unrelated to $U$, as indicated by $Z\indep  W\mid \bmX$. To guard against this scenario, we recommend 
assessing the conditional association between $Z$ and $W$ given $\bmX$ before conducting the main analysis.

\vspace{-0.5cm}\subsubsection{Method 2: Regression-based two-stage estimator} \label{sec:twostage}\vspace{-0.3cm}

The semiparametric estimator described above relies on minimal assumptions about the underlying data-generating process, but in HIV prevention trials where event rates are typically low, regression-based methods that offer greater efficiency may be more practical. Previous regression-based proximal causal inference methods have been proposed by \cite{liu2024regression} for generalized linear models and by \cite{li2024regression} for right-censored time-to-event outcomes using an additive hazards framework. Since Cox proportional hazards models \citep{cox1972regression} are more commonly used in HIV prevention trials, we develop a new regression-based proximal inference approach based on the Cox model, specifically tailored to settings with low-incidence outcomes.

For the regression-based approach, we introduce a new assumption that reflects the low incidence of HIV typically observed in prevention trials. This rare event assumption helps address the non-collapsibility of the Cox model. 

\begin{assumption}[Rare Event]\label{assump:rare event}
    For some pre-specified time point $t_0$, the event-free probability under placebo satisfies $P(T^*(0)>t_0|U,\bmX) \approx 1$, reflecting the low incidence of the event.
\end{assumption}

Under Assumption \ref{assump:rare event}, the hazard function under placebo can be approximated by the density function, which is collapsible \citep{tchetgen2015ivsurvival}. Specifically, for $t\leq t_0$, $ \lambda^{(0)}(t\mid U,\bmX) \approx f^{(0)}(t\mid U,\bmX)$, where $\lambda^{(0)}(t\mid \cdot)$ and $f^{(0)}(t\mid \cdot)$ denote the conditional hazard and density functions under placebo. In our application, we set $t_0=1$ year, a natural time unit for HIV prevention trials, justified by typically low 1-year cumulative incidence under placebo. For example, the AMP study reported a one-year cumulative incidence of 2.98\% under placebo, corresponding to {a one-year probability of no HIV infection of 0.97.}

Next, we specify a distributional assumption for the unmeasured confounder $U$. 

\begin{assumption}[Unmeasured confounder]\label{assump:latent variable}
    We assume that $U = E(U|R,Z,\bmX) + \epsilon$, where $\epsilon \indep (R,Z,\bmX)$, $E(\epsilon) = 0$, {and $E(U\mid R=1, Z, \bmX)\neq E(U\mid R=1
    , \bmX)$ almost surely.}
\end{assumption}

Assumption \ref{assump:latent variable} is appropriate when $U$ is continuous, as it implies a mean-independent error structure. It does not hold when $U$ is binary. We also explored a modified version of the assumption for binary $U$ based on an alternative model, but found that estimation becomes unstable due to the large number of parameters involved. 
Therefore, when applying the two-stage approach, it should be understood that $U$ is assumed to be continuous. Note that underlying data-generating processes can satisfy the identification assumptions of both the IPCW approach (with binary
$U$) and the two-stage approach (with continuous $U$); we provide an example in Supplement \ref{supp:sim}. In addition, we assume that the negative control exposure $Z$ is related to $U$ conditional on $\bmX$ in the external study. Notably, the completeness assumption (Assumption \ref{assump: completeness}) is not required in the two-stage approach and is implicitly replaced by the standard full-rank conditions in generalized linear model, which is why we can use binary $Z$ and $W$ to address unmeasured confounding even if $U$ is continuous.

Next, we specify a log-linear model for the binary negative control outcome $W$ and a Cox model for the counterfactual event time $T^*(0)$. 

\begin{assumption}[Models for $W$ and $T^*(0)$]\label{assump:NCO structure}
(a) We assume that the negative control outcome \( W \) follows a log-linear model: $E(W|U,\bmX) =  \exp(\beta_{W0} + \beta_{Wx}\bmX + \beta_{Wu}U)$, where \( \beta_{W0} \), \( \beta_{Wx} \), and \( \beta_{Wu} \) are model parameters, {and $\beta_{Wu}\neq 0$.} 
(b) We assume that the counterfactual event time $T^*(0)$ follows a Cox model: $\lambda^{(0)}(t \mid U,\bmX) = \lambda_0(t)\exp(\beta_{Tu} U + \beta_{Tx}\bmX) $, where \( \lambda_0(t) \) is the baseline hazard function and \( \beta_{Tu} \), \( \beta_{Tx} \) are model parameters.
\end{assumption}

Lastly, we also impose Assumption \ref{assump: negative controls} for the negative controls and Assumption \ref{assump: censoring} for censoring at random in the external data. Theorem \ref{theo:twostage} then establishes the main identification results for the regression-based approach.

\begin{theorem}\label{theo:twostage}
Under Assumptions \ref{assump: negative controls} and \ref{assump: censoring}-\ref{assump:NCO structure}, for any  $t\leq t_0$, we have 
\begin{align}
    \lambda^{(0)}(t \mid R, Z, \bmX) \approx \tilde{\lambda}_0(t) \exp\Big[ \big\{ \tilde\beta_{Tx} \bmX + \tilde\beta_{Tw} \log\big( E(W \mid R, Z, \bmX) \big) \big\} \Big], \label{eq: two-stage main identification}
\end{align}
where $\tilde \lambda_0(t)$, $ \tilde\beta_{Tx}$ and $ \tilde \beta_{Tw}$ are parameters that are identifiable using data from the external study ($R=1$). 
As a result, the counterfactual hazard function for the primary study can be identified as  $ \lambda^{(0)}(t|R=0,Z,\bmX) = \tilde{\lambda}_0(t) \exp\Big[ \big\{ \tilde\beta_{Tx} \bmX + \tilde\beta_{Tw} \log\big( E(W \mid R=0, Z, \bmX) \big) \big\} \Big]$.  
\end{theorem}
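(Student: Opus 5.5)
The plan is to marginalize the conditional Cox model over $U$ given $(R,Z,\bmX)$, using the rare event assumption to make the hazard collapsible. Then replace the resulting conditional mean of $U$ by $\log E(W\mid R,Z,\bmX)$, using the log-linear model for $W$.

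\textbf{Step 1: collapse the hazard.} Fix $t\le t_0$. By Assumption \ref{assump: negative controls}, $T^*(0)\indep (Z,R)\mid \bmX,U$, so
\[
f^{(0)}(t\mid R,Z,\bmX)=E\{f^{(0)}(t\mid U,\bmX)\mid R,Z,\bmX\}.
\]
Under Assumption \ref{assump:rare event}, survival is approximately $1$ both conditionally on $(U,\bmX)$ and conditionally on $(R,Z,\bmX)$. Hence hazards can be replaced by densities on both sides:
\[
\lambda^{(0)}(t\mid R,Z,\bmX)\approx E\{\lambda^{(0)}(t\mid U,\bmX)\mid R,Z,\bmX\}.
\]
Substituting Assumption \ref{assump:NCO structure}(b) and writing $U=\mu(R,Z,\bmX)+\epsilon$ as in Assumption \ref{assump:latent variable}, with $\epsilon$ independent of $(R,Z,\bmX)$, we get
\[
\lambda^{(0)}(t\mid R,Z,\bmX)\approx \lambda_0(t)\,E\{e^{\beta_{Tu}\epsilon}\}\exp\{\beta_{Tu}\mu(R,Z,\bmX)+\beta_{Tx}\bmX\}.
\]

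\textbf{Step 2: express $\mu$ through $W$.} By Assumption \ref{assump: negative controls}, $W\indep(Z,R)\mid \bmX,U$. Assumption \ref{assump:NCO structure}(a) together with the same decomposition of $U$ then gives
\[
E(W\mid R,Z,\bmX)=E\{e^{\beta_{Wu}\epsilon}\}\exp\{\beta_{W0}+\beta_{Wx}\bmX+\beta_{Wu}\mu(R,Z,\bmX)\}.
\]
This is exact, with no approximation. Since $\beta_{Wu}\ne0$, we can solve
\[
\mu=\beta_{Wu}^{-1}\bigl[\log E(W\mid R,Z,\bmX)-\beta_{W0}-\log E e^{\beta_{Wu}\epsilon}-\beta_{Wx}\bmX\bigr].
\]
Plugging this into Step 1 gives \eqref{eq: two-stage main identification} with
\[
\tilde\beta_{Tw}=\beta_{Tu}/\beta_{Wu},\qquad \tilde\beta_{Tx}=\beta_{Tx}-\tilde\beta_{Tw}\beta_{Wx},
\]
\[
\tilde\lambda_0(t)=\lambda_0(t)\,E e^{\beta_{Tu}\epsilon}\,\exp\{-\tilde\beta_{Tw}(\beta_{W0}+\log Ee^{\beta_{Wu}\epsilon})\}.
\]
These coefficients do not depend on $R$, which is what allows transport from $R=1$ to $R=0$.

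\textbf{Step 3: identification from $R=1$.} $E(W\mid R,Z,\bmX)$ is a regression of observed variables, so it is identified in both studies. In the external study, $A=0$, so consistency gives $\lambda^{(0)}(t\mid R=1,Z,\bmX)=\lambda_T(t\mid R=1,Z,\bmX)$, the hazard of $T^*$. Assumption \ref{assump: censoring} says censoring is independent given $(Z,\bmX)$ within $R=1$. Therefore this hazard equals the observable cause-specific hazard, and a Cox partial likelihood identifies $\tilde\lambda_0$, $\tilde\beta_{Tx}$ and $\tilde\beta_{Tw}$, provided the regressors $(\bmX,\log E(W\mid R=1,Z,\bmX))$ are not collinear.

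\textbf{Main obstacle: the full-rank requirement.} Non-collinearity follows from $E(U\mid R=1,Z,\bmX)\ne E(U\mid R=1,\bmX)$ in Assumption \ref{assump:latent variable}. Combined with $\beta_{Wu}\ne0$, this means $\log E(W\mid R=1,Z,\bmX)$ varies with $Z$ given $\bmX$, so it is not a function of $\bmX$ alone. This step, together with justifying that the rare-event approximation holds uniformly after conditioning on $(R,Z,\bmX)$, is where I expect the care to be needed. For the latter, I would note that $P(T^*(0)>t_0\mid R,Z,\bmX)$ is an average of quantities each $\approx1$. Applying the identified model at $R=0$ then gives the final display of the theorem.
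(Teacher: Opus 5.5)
Your proposal is correct and follows essentially the same route as the paper's proof. Both arguments replace hazards by densities under the rare-event assumption, factor $E\{e^{\beta_{Tu}\epsilon}\}$ out of $E\{e^{\beta_{Tu}U}\mid R,Z,\bmX\}$ using the decomposition $U=E(U\mid R,Z,\bmX)+\epsilon$, and invert the log-linear NCO model to write $E(U\mid R,Z,\bmX)$ as a linear combination of $\bmX$ and $\log E(W\mid R,Z,\bmX)$. Your explicit coefficient formulas and your Step 3 identification argument (censoring at random within $R=1$, and non-collinearity from $E(U\mid R=1,Z,\bmX)\neq E(U\mid R=1,\bmX)$ together with $\beta_{Wu}\neq 0$) make precise what the paper states only informally in the main text.
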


The intuition behind our approach is illustrated in the following derivations. Let $t\leq t_0$. Under Assumptions \ref{assump: negative controls}, \ref{assump:rare event}, and \ref{assump:NCO structure}, we have  
\begin{align*}
	f^{(0)}(t\mid R,Z, U, \bmX)&\approx\lambda^{(0)}(t \mid R, Z, U,\bmX) = \lambda_0(t)\exp(\beta_{Tu} U + \beta_{Tx}\bmX), \\
	E(W \mid R, Z,  U,\bmX) &=\exp(\beta_{W0}   + \beta_{Wx}\bmX+ \beta_{Wu} U). 
\end{align*} 
Next, marginalizing over the distribution of $U$ conditional on $(R, Z, \bmX)$, and applying Assumption \ref{assump:latent variable}, we obtain:
\begin{align*}
	f^{(0)}(t \mid R, Z,\bmX) &= E\{ \lambda_0(t)  \exp(\beta_{Tu}\epsilon) \}\exp(\beta_{Tx}\bmX+\beta_{Tu}E(U\mid R, Z,\bmX))\\
	E(W \mid R, Z,\bmX) &= E\{ \exp(\beta_{W0}  + \beta_{Wu}\epsilon) \}\exp( \beta_{Wx}\bmX+\beta_{Wu}E(U\mid R, Z,\bmX) )
\end{align*} 
Hence, there exist $ \tilde \beta_{T0}, \tilde \beta_{Tu},  \tilde \beta_{Tx} $, which are shared across primary and external datasets,  such that $		\lambda^{(0)}(t\mid R, Z, \bmX)\approx 
        f^{(0)}(t\mid R,Z,\bmX) = \tilde \lambda_0(t) \exp \big\{\tilde\beta_{Tx} \bmX + \tilde\beta_{Tw} \log(E(W|R,Z,\bmX))\big\},$ where the approximation is by $P(T^*(0)>t_0|R, Z, \bmX)=E\{ P(T^*(0)>t_0|U,\bmX) \mid R, Z, \bmX \}  \approx 1$. This completes the argument underlying Theorem \ref{theo:twostage}.

For implementation, we first estimate $E(W\mid R, Z, \bmX)$ using data from both the primary and external studies. Since only the external dataset contains observations under placebo, and the model coefficients are shared across datasets, we use the external data to fit the Cox model in equation \eqref{eq: two-stage main identification}. Specifically, we regress the observed event times on $\bmX$ and $\log E(W\mid R=1, Z, \bmX)$ as regressors in the Cox regression. This yields estimates of the parameters $\tilde\lambda_0(t), \tilde\beta_{Tx}, \tilde\beta_{Tw}$. These parameters are identifiable because, under the assumed conditions, $E(W\mid R=1, Z, \bmX)$ varies with $Z$, providing sufficient variation for estimation. Once the parameters are estimated, we can compute $\lambda^{(0)}(t\mid R=0,Z,\bmX) $ according to Theorem \ref{theo:twostage}. {This in turn yields $P(T^*(0)\leq t|R=0, Z, \bmX)$ and thus $P(T^*(0)\leq t|R=0)= E\{P(T^*(0)\leq t|R=0, Z, \bmX)\mid R=0 \}$.} When $\lambda_0(t)$ is assumed to be time-invariant, all model parameters -- including the parameter of interest and nuisance parameters -- can be stacked, and their estimation can be formulated through estimating equations.

{For statistical inference, one may use the nonparametric bootstrap. Alternatively, when $\lambda_0(t)$ is assumed time-invariant, a sandwich variance estimator can be applied. {Similarly, the parameter of interest is {transformed using} $\log (-\log(\cdot))$ to improve asymptotic approximation.} In our empirical results, the sandwich approach yields more precise variance estimates than the bootstrap when events are rare.}

We evaluate the finite sample performance of all proposed proximal inference estimators through comprehensive simulation studies presented in Supplement \ref{supp:sim}.


\vspace{-0.7cm}\section{Case Study: Application to HIV Prevention Trials}\label{sec:realdata}\vspace{-0.4cm}

\subsection{Estimating counterfactual HIV incidence under placebo}\vspace{-0.3cm}
We applied our method to HPTN 083, using the placebo arm of the AMP study as external control data to estimate the counterfactual one-year cumulative HIV incidence under placebo in HPTN 083. Given the similar HIV incidence rates across the three AMP study arms, we also present results using all three arms combined as a single external control dataset in the Supplement Section \ref{supp:allarms}.

Covariates $\bmX$ in the adjustment set included age (18-20, 21-30, $\geq$30 years), gender (male, other), and race (Black, White, other). {Note that the Cox model only showed significant associations between age and race and the outcome, but we additionally adjusted for gender in this primary analysis because of its relevance to HIV infection in the literature.} Geographic region was used as the NCE, defined as Latin America ($Z = 0$) vs. non-Latin America ($Z = 1$). {Baseline sexually transmitted infection (STI) status based on laboratory testing served as the NCO, with $W = 1$ indicating a positive STI result Table \ref{table:table1}. We evaluated three NCO options: (1) baseline rectal gonorrhea infection, (2) baseline rectal chlamydia infection, and (3) either baseline rectal gonorrhea or chlamydia infection. To evaluate the validity of the NCO, we fitted logistic regressions of $W$ on $Z$ and $\bmX$. The association between $W$ and $Z$ {given $\bmX$} when using baseline rectal chlamydia as the NCO was statistically insignificant (OR $=0.846, p=0.068$), indicating it is not a suitable proxy for the unmeasured confounder. Both baseline rectal gonorrhea and the combined STI showed statistically significant associations with geographic region {conditional on $\bmX$} (OR $= 0.605, p<0.001$ and OR $=0.740, p<0.001$, respectively). 
However, given the low prevalence of baseline gonorrhea infection in both datasets (6.5\% in HPTN 083 and 3.7\% in AMP), we present the primary analysis with the combined STI indicator as the NCO.  Results for individual STI components are reported in Supplement \ref{supp:chlamNCO}.}

We applied both proposed approaches to estimate the counterfactual HIV incidence under placebo. The first was a semiparametric IPCW method (Section \ref{sec:semiparametric}) with a linear outcome bridge function, $h(W, \bmX) = \beta_{h0} + \beta_{hW} W + \beta_{hX} \bmX$, and a linear {treatment bridge} function, $q(Z, \bmX) = \beta_{q0} + \beta_{qZ} Z + \beta_{qX} \bmX$. A doubly robust estimator was then constructed by combining these bridge functions. 
The second was a regression-based two-stage method (Section \ref{sec:twostage}) that assumed a time-invariant baseline hazard and was applied to both the full follow-up data and a truncated dataset where all participants were administratively censored at one year, corresponding to $t_0 = 1$ year in Theorem \ref{theo:twostage} to satisfy Assumption \ref{assump:rare event}. Specifically, we estimated the survival function on the external data ($R=1$) assuming an exponential distribution, then used the fitted function to predict survival probabilities in the primary HPTN 083 dataset.   
For comparison, we report results from naïve regression models that adjust for $\bmX$ alone or for $\bmX$, $Z$, and $W$ jointly. Estimates and confidence intervals for all methods were obtained by the \texttt{gmm} package in \textsf{R}. To improve asymptotic approximation, we used the transformed the $\log(-\log(\cdot))$ of the target parameter.

{In our primary analysis,  participants with missing NCO values (fewer than 10\% missing for each NCO) were excluded as shown in Figure \ref{fig:flow}, with results in Figure \ref{fig:forest_est}. We also explored alternative covariate adjustment sets in Supplement \ref{supp:covset}: (1) age and race;  (2) age, race, gender, ethnicity, and education level (college or above vs. high school or below). Adjusting for fewer covariates yields similar estimates, while adjusting more covariates produces larger estimates and wider confidence intervals because of the low event rates.}


\begin{figure}[!htbp]
    \centering
    \includegraphics[width=\linewidth]{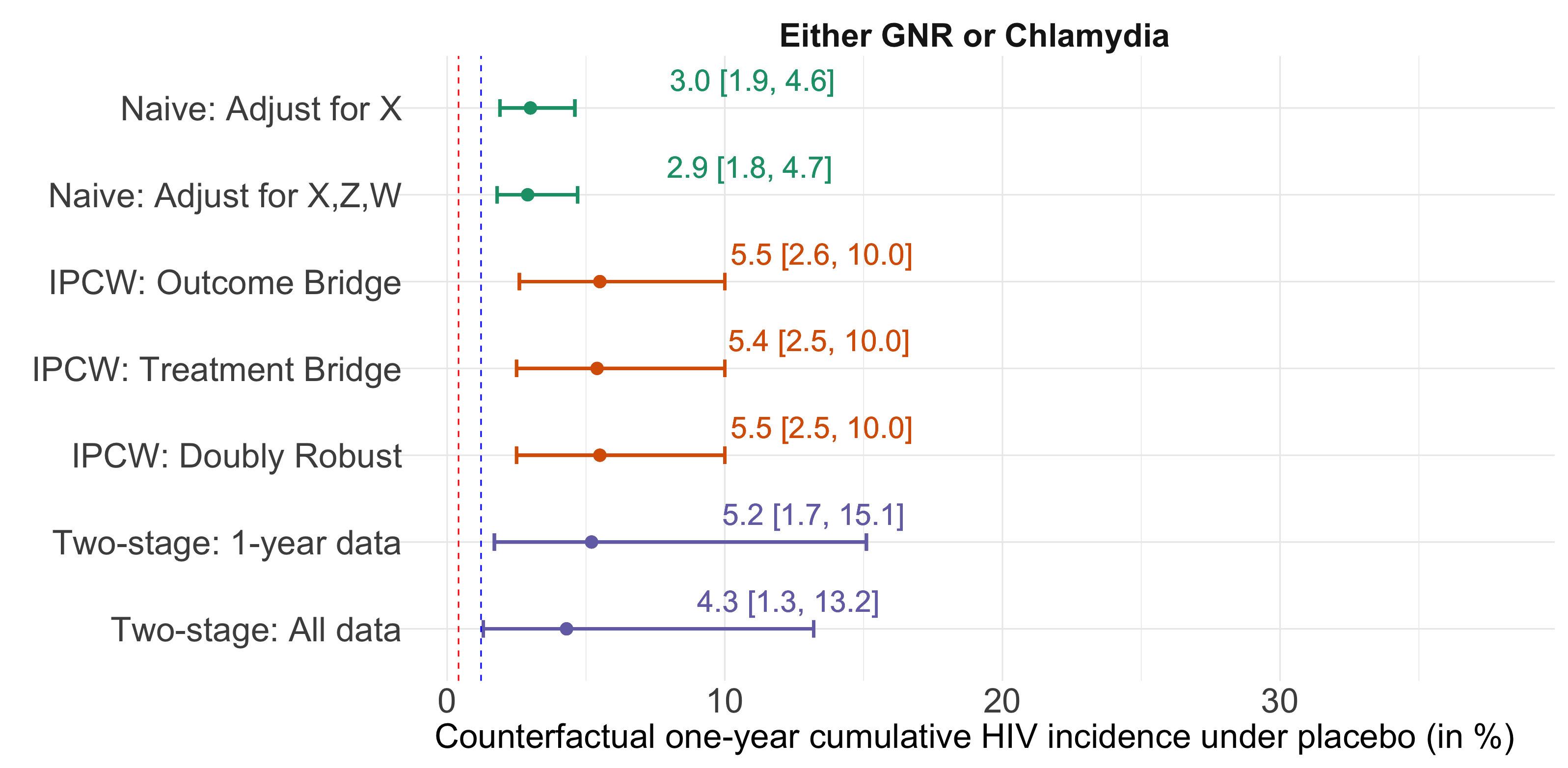}
    \caption{Estimated counterfactual one-year HIV cumulative incidence under placebo (in \%) for HPTN 083 using different methods and either baseline gonorrhea or chlamydia infection as the NCO. The red dashed line indicates the observed one-year cumulative incidence in the cabotegravir arm (0.41\%), and the blue dashed line indicates the observed one-year cumulative incidence in the TDF/FTC arm (1.22\%). 
    }
    \label{fig:forest_est}
\end{figure}

Across all proximal causal inference approaches, counterfactual one-year cumulative HIV incidence under placebo was consistent. The three semiparametric IPCW estimators (outcome bridge, treatment bridge, and doubly robust) and the two two-stage regression-based methods yielded point estimates, ranging from 4.3 to 5.5 per 100 person-years. As expected, the two-stage regression-based approach applied to full follow-up data produced narrower confidence intervals than the one-year analysis, due to more observed events. 

{In contrast, the two naïve regression models produced lower estimates of counterfactual HIV cumulative incidence under placebo, likely reflecting residual bias from unmeasured confounding. The naïve model yielded point estimates of 3.0 and 2.9 per 100 person-years adjusting only for $\bmX$ or for $\bmX$ as well as the negative control variables $Z$ and $W$. 
} 
Furthermore, the naïve regression models produce narrower confidence intervals because they assume no unmeasured confounding exists and that outcome models based on observed covariates from external control data can be directly applied to the primary dataset. In contrast, proximal learning recognizes that these outcome models may be incompatible across datasets due to unmeasured confounding and use negative controls to address them, therefore introducing additional uncertainty that results in wider confidence intervals.

\vspace{-0.5cm}
\subsection{Statistical inference on absolute efficacy} \label{sec:inference}
\vspace{-0.3cm}

Next, we used the counterfactual one-year HIV cumulative incidence under placebo, estimated from the proximal causal inference approaches, to evaluate the absolute efficacy of both treatments in HPTN 083. For comparison, the observed one-year HIV cumulative incidence in HPTN 083 was 0.41\% (95\% CI: [0.2, 0.7]) in the cabotegravir arm and 1.22\% (95\% CI: [0.9, 1.7]) in the TDF/FTC arm. 

To assess statistical significance, we computed Wald test statistics to determine whether significant differences existed in one-year HIV cumulative incidence between each active treatment arm and the estimated placebo arm. The Wald statistics for all approaches were calculated as follows:
$$
T_{\rm CAB}=\frac{\hat\beta_{\rm CAB.log}-\hat\beta_{\rm placebo.log}}{SE(\hat{\beta}_{\rm CAB.log}-\hat{\beta}_{\rm placebo.log})}, \hspace{2cm} T_{\rm TDF}=\frac{\hat\beta_{\rm TDF.log}-\hat\beta_{\rm placebo.log}}{SE(\hat{\beta}_{\rm TDF.log}-\hat{\beta}_{\rm placebo.log})}.
$$
{In the formula above, $\hat{\beta}_{*.\rm log}=\log(-\log(\hat{\beta}_*))$ denotes the complementary log-log transformed estimates of one-year cumulative incidence in each treatment arm ($*= \rm{TDF}$ or $\rm CAB$) in HPTN 083, $\hat{\beta}_{\rm placebo.log}=\log(-\log(\hat{\beta}_{\rm placebo}))$ denotes the transformed counterfactual placebo estimates from either IPCW or two-stage regression-based methods. Note that $\hat{\beta}_{\rm *.log}$ and $\hat{\beta}_{\rm placebo.log}$ are correlated because $\hat{\beta}_{\rm placebo.log}$ is estimated using participants from both treatment arms.} To account for this correlation, we directly estimate
$SE(\hat{\beta}_{\rm *.log}-\hat{\beta}_{\rm placebo.log})$ using sandwich variance estimators for IPCW and two-stage regression-based approaches.

The results are presented in Table \ref{tab:treatment_efficacy_analysis_placebo}. 
{Our results demonstrated that cabotegravir has} statistically significant superiority over the counterfactual placebo arm across all proximal inference methods (all p-values $\leq$ 0.001). 
All approaches also demonstrated statistically significant superiority of {TDF/FTC over placebo with estimated relative efficacy being 71.6\%-77.8\% although with larger p-values.
This pattern aligns with expectations, given that the AMP study placebo arm maintained non-negligible PrEP use (detectable concentrations of TDF/FTC in 39.0\% of the samples in AMP placebo arm vs. 72.3\% in HPTN 083 TDF/FTC arm, as discussed in Section \ref{amp_external_intro}. These findings provide robust evidence for cabotegravir efficacy and moderate evidence for TDF/FTC efficacy when compared to counterfactual placebo conditions. } Despite limitations, the results demonstrate that the proposed methodology can serve as a viable analytical framework for evaluating placebo-controlled efficacy in future trials without concurrent placebo control arms, particularly for investigational agents with comparable efficacy to cabotegravir.

\begin{table}[!htbp]
\centering
\caption{Estimated relative effectiveness (one minus ratio of the one-year cumulative HIV incidences of active treatment arms over placebo), estimated absolute efficacy (difference in one-year cumulative HIV incidence), Wald test statistics, and p-values comparing active treatment arms to the estimated placebo arm. Statistical comparisons are presented for cabotegravir vs. placebo and TDF/FTC vs. placebo across different analytical approaches.}
\label{tab:treatment_efficacy_analysis_placebo}
\resizebox{\textwidth}{!}{
\begin{tabular}{llccccccccc}
\toprule
{NCO} & {Method} & {Specification} & \multicolumn{4}{c}{{Cabotegravir}} & \multicolumn{4}{c}{{TDF/FTC}} \\ & & & {Rel. Efficacy } & {Abs. Efficacy} & {Test Stat.} & {p-value} & {Rel. Efficacy} & {Abs. Efficacy } & {Test Stat.} & {p-value}\\ 
\midrule 

& Naïve & Adjusted for $X$     
 & 0.863 & 0.026 & -5.593 & $<0.001$  
 & 0.593 & 0.017 & -3.111 & 0.002 \\\midrule
 
 \multirow{7}{*}{\centering {Either STI}}
 & Naïve & Adjusted for $X,Z,W$     
 & 0.859 & 0.025 & -5.205 & $<0.001$  
 & 0.579 & 0.017 & -2.933 & 0.003 \\

 & IPCW & Outcome bridge     
 & 0.925 & 0.051 & -4.960 & $<0.001$  
 & 0.778 & 0.043 & -3.360 & 0.001 \\

 & IPCW & Treatment bridge     
 & 0.924 & 0.050 & -4.870 & $<0.001$  
 & 0.774 & 0.042 & -3.284 & 0.001 \\

 & IPCW & Doubly robust     
 & 0.925 & 0.051 & -4.945 & $<0.001$  
 & 0.778 & 0.043 & -3.346 & 0.001 \\

 & Two-stage & 1-year data     
 & 0.921 & 0.048 & -3.692 & $<0.001$  
 & 0.765 & 0.040 & -2.252 & 0.023 \\

 & Two-stage & All data     
 & 0.905 & 0.039 & -3.731 & $<0.001$  
 & 0.716 & 0.031 & -2.202 & 0.028 \\



    





\bottomrule
\end{tabular}}
\begin{tablenotes}
\footnotesize
\item IPCW = inverse probability of censoring weighted; $X$ = baseline covariates; $Z$ = NCE; $W$ = NCO.
\end{tablenotes}
\end{table}

\vspace{-0.6cm}\section{Discussion}\label{sec:discussion}\vspace{-0.3cm}


We propose two proximal inference methods for estimating counterfactual HIV incidence under placebo using external control data with unmeasured confounding. {These methods address key methodological challenges in HIV prevention trials, including right censoring and low event rates.} Specifically, we (1) extended the semiparametric framework of \cite{cui2020semiparametric} and \cite{yingproximal} to accommodate right censoring and integrate external controls, and (2) developed a novel two-stage regression-based method tailored for low-incidence outcomes. Our approaches have broad applicability beyond HIV prevention trials. Both methods are suitable for low-incidence settings, while the IPCW method extends to moderate-to-high incidence outcomes. Table \ref{table:comparison} compares the semiparametric IPCW and two-stage regression approaches, including their assumptions, advantages, and limitations. {Semiparametric IPCW approach requires fewer modeling assumptions and does not require $Z$ to exist in the primary dataset, but the estimates and confidence intervals may not be in the support of $(0,1)$. In comparison, the two-stage regression-based approach assumes a rare-event setting, and is not easily generalizable to discrete unmeasured confounders.} Careful consideration of these methodological differences is essential for practical applications.

{We applied these methods to the HPTN 083 trial to estimate counterfactual one-year HIV cumulative incidence under placebo. Across varying analytical specifications, both the semiparametric IPCW and two-stage regression-based approaches produced consistent estimates, ranging from 4.3\% to 5.5\%. All proximal inference methods revealed significant differences between observed cabotegravir incidence and TDF/FTC incidence compared to estimated placebo incidence. These methods demonstrate the feasibility of evaluating absolute efficacy in trials without concurrent placebo groups, particularly for agents with efficacy comparable to cabotegravir.}

Identifying valid negative controls $Z$ and $W$ that serve as strong proxies for unmeasured confounders is crucial for identifying the target causal parameter in the proximal learning framework and directly impacts estimate precision. In our HIV prevention trial application, we used combined baseline rectal gonorrhea or chlamydia infection as the NCO and geographic region as the NCE, to address a major unmeasured confounder: {local HIV transmission environment risk}. In practice, multiple unmeasured confounders may be of concern, requiring proximal learning methods to identify valid NCE and NCO pairs for each confounder. Future research should apply these methods across broader HIV prevention trials to evaluate their robustness in diverse settings and assess optimal negative control selection strategies for obtaining reliable and efficient estimates of counterfactual HIV incidence.


The absolute efficacy estimated using external control methods (including our proposed methods) should be interpreted in the context of the specific placebo condition in the external dataset. For example, our placebo condition should be interpreted relative to PrEP use in the AMP study's placebo arm, which represents a clinically relevant placebo definition. Future statistical methods could be developed to estimate absolute efficacy relative to a ``pure'' placebo with no PrEP use by leveraging the varying PrEP usage patterns within the AMP placebo arm.


\begin{table}[h]
\caption{Comparison of the proposed semiparametric IPCW and two-stage regression-based approaches}
\resizebox{\textwidth}{!}{\begin{tabular}{|p{.13\textwidth}|p{.45\textwidth}|p{.45\textwidth}|}

\hline
& Semiparametric IPCW & Two-stage regression-based\\ \hline
Assumptions & \begin{enumerate}
    \item[A1.] Unmeasured confounder and valid negative controls
    \item[A2.] Positivity
    \item[A3.] Completeness
    \item[A4.] Censoring at random in the external study
\end{enumerate}  & \begin{enumerate}
    {\setlength{\itemindent}{0.2cm}\item[A$1^*$.] Continuous unmeasured confounder and valid negative controls}
    \item[A4.] Censoring at random in the external study
    \item[A5.] Low event incidence (rare events)
    \item[A6.] Unmeasured confounder mean-independent error structure 
    \item[A7.] Log-linear model for NCO and Cox model for event time 
\end{enumerate} \\ 
\hline
Advantages        & \begin{itemize}[leftmargin=1.4em]
    \item Requires fewer modeling assumptions
    \item Handles different types of unmeasured confounders depending on the type of negative control variables 
    \item NCE doesn't need to be available in the primary dataset
\end{itemize} & \begin{itemize}[leftmargin=1.4em]
    \item Typically yields more precise estimates with narrower confidence intervals in the presence of rare events
\end{itemize} \\ \hline
Limitations        & \begin{itemize}[leftmargin=1.4em]
    \item Less efficient for rare outcomes, with wider confidence intervals
    \item Estimates and confidence intervals may not be in $(0,1)$
\end{itemize} & \begin{itemize}[leftmargin=1.4em]
    \item Not easily generalizable to discrete unmeasured confounders 
    \item Needs stronger modeling assumptions
    \item Assumes a rare-event setting
\end{itemize}\\ \hline
\end{tabular}}
\label{table:comparison}
\end{table}

\clearpage

\bibliography{reference}

\clearpage
\begin{center}

{\large\bf SUPPLEMENTARY MATERIAL}

\end{center}

\setcounter{equation}{0}
\setcounter{table}{0}
\setcounter{figure}{0}
\setcounter{lemma}{1}
\setcounter{section}{0}
\renewcommand{\theequation}{S\arabic{equation}}
\renewcommand{\thelemma}{S\arabic{lemma}}
\renewcommand{\thetable}{S\arabic{table}}
\renewcommand{\thefigure}{S\arabic{figure}}
\renewcommand{\thesection}{S\arabic{section}}


\section{Additional Analysis}

\subsection{Combining all arms in AMP as the external dataset}\label{supp:allarms}

As discussed in Section \ref{amp_external_intro}, the observed HIV cumulative incidence through 1 year (\%) in AMP were 2.5 (95\% CI: [1.71, 3.53]), 2.2 (95\% CI: [1.46, 3.18]), and 2.98 (95\% CI: [2.11, 4.09]) in the low-dose, high-dose, and placebo groups, respectively. The trial did not demonstrate a statistically significant difference in HIV incidence between the VRC01 and placebo arms. Hence, as a supportive analysis, we combined all arms in AMP as the external dataset and explored rectal gonorrhea, rectal chlamydia, and either STI as the NCO. To evaluate the validity of each NCO, we fitted logistic regressions of W on Z and X. When using rectal gonorrhea, chlamydia or either STI as the NCO, we found statistically significant association between $Z$ and $W$ after adjusting for $\bmX$ (p$<$0.001, p$=$0.023, and p$<$0.001 respectively), supporting their
use as more appropriate proxies. All results are in Figure \ref{fig:all_allarms} and Table \ref{tab:sti_gnr_chlam_allarms}. 

We find that the estimated counterfactual HIV cumulative incidence through 1 year are lower across methods. This is expected since the incidence rates in the low-dose and high-dose arms in AMP are lower despite their statistical insignificance compared to the placebo arm. The confidence intervals are also smaller due to a larger sample size and a larger number of events combining all arms. However, as we mentioned, the counterfactual HIV cumulative incidence needs to be interpreted based on the ``placebo'' definition in the external data. Since in this analysis, all arms (both the placebo and the active treatment arms) with mixed PrEP use are combined as the external data, the results might be harder to interpret. 

\begin{figure}[!htbp]
    \centering
    \includegraphics[width=\linewidth]{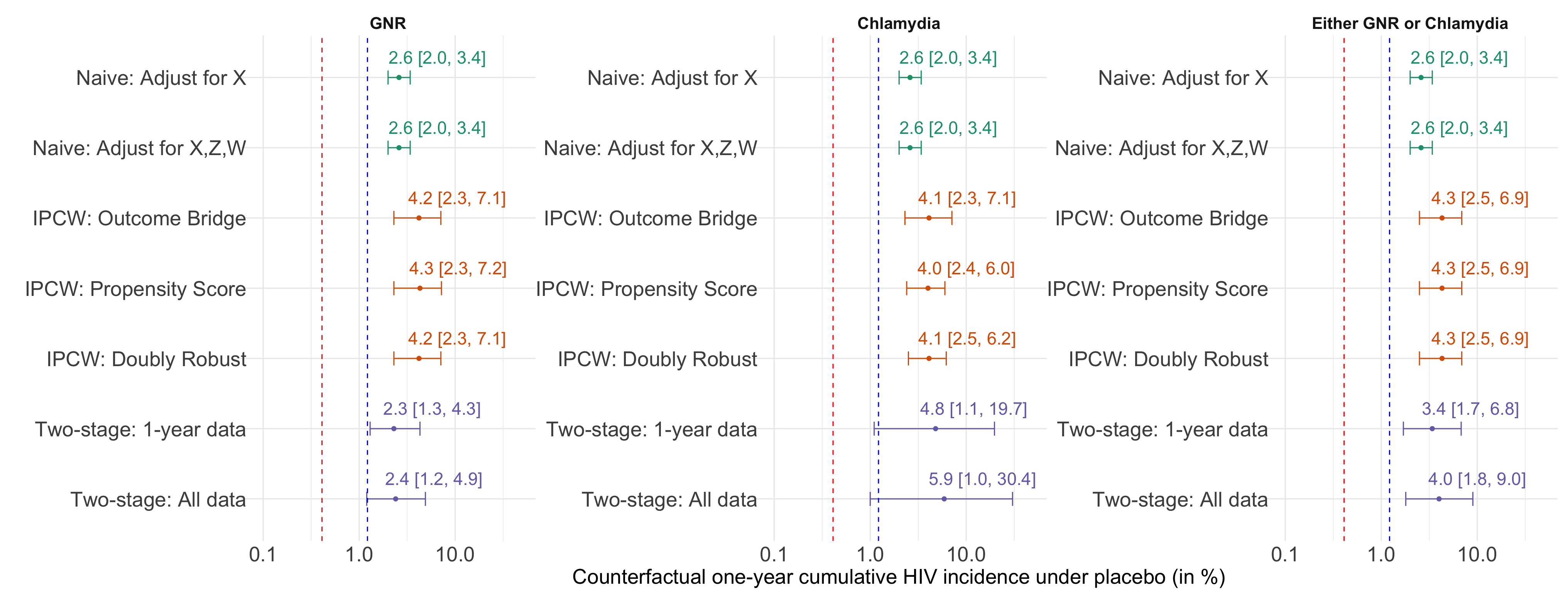}
    \caption{Estimated counterfactual HIV cumulative incidence through 1 year (\%) for HPTN 083 using different methods and NCOs. All arms in the AMP data were combined as the external control dataset. The red dashed line indicates the observed incidence rate in the Cabotegravir arm (0.41 per 100 person-years), and the blue dashed line indicates the observed incidence rate in the TDF/FTC arm (1.22 per 100 person-years). \textbf{Age, race and gender} were adjusted in all models. \textbf{All arms} in the AMP data were combined as the external data. The x-axis is plotted with the $\log()$ transformation for better visualization.}
    \label{fig:all_allarms}
\end{figure}

\begin{table}[!htbp]
\centering
\caption{Wald test statistics comparing HIV cumulative incidence through 1 year between active treatment arms and the estimated placebo arm. Statistical comparisons are presented for cabotegravir versus placebo and emtricitabine/tenofovir disoproxil fumarate (TDF/FTC) versus placebo across different analytical approaches. \textbf{Age, race and gender} were adjusted in all models. \textbf{All arms} in the AMP data were combined as the external data. }
\label{tab:sti_gnr_chlam_allarms}
\resizebox{\textwidth}{!}{
\begin{tabular}{ccccccccccc}
\toprule
{NCO} & {Method} & {Specification} & \multicolumn{4}{c}{{Cabotegravir}} & \multicolumn{4}{c}{{TDF/FTC}} \\
 &  &  & {Rel. Efficacy} & {Abs. Efficacy} & {Test Stat.} & {p-value}
 & {Rel. Efficacy} & {Abs. Efficacy} & {Test Stat.} & {p-value} \\
\midrule


& Naïve & Adjust for $X$
& 0.842 & 0.022 & -6.193 & $<0.001$
& 0.531 & 0.014 & -3.885 & $<0.001$ \\
\midrule

\multirow{6}{*}{GNR}
& Naïve & Adjust for $X,Z,W$
& 0.842 & 0.022 & -6.013 & $<0.001$
& 0.531 & 0.014 & -3.746 & $<0.001$ \\

& IPCW & Outcome Bridge
& 0.902 & 0.038 & -5.336 & $<0.001$
& 0.710 & 0.030 & -3.383 & 0.001 \\

& IPCW & Propensity Score
& 0.905 & 0.039 & -5.321 & $<0.001$
& 0.716 & 0.031 & -3.393 & 0.001 \\

& IPCW & Doubly Robust
& 0.902 & 0.038 & -5.340 & $<0.001$
& 0.710 & 0.030 & -3.388 & 0.001 \\

& Two-stage & 1-year
& 0.822 & 0.019 & -4.847 & $<0.001$
& 0.470 & 0.011 & -2.898 & 0.004 \\

& Two-stage & All data
& 0.829 & 0.020 & -4.992 & $<0.001$
& 0.492 & 0.012 & -2.957 & 0.003 \\
\midrule

\multirow{6}{*}{Chlamydia}
& Naïve & Adjust for $X,Z,W$
& 0.842 & 0.022 & -6.034 & $<0.001$
& 0.531 & 0.014 & -3.774 & $<0.001$ \\

& IPCW & Outcome Bridge
& 0.900 & 0.037 & -5.336 & $<0.001$
& 0.702 & 0.029 & -3.383 & 0.001 \\

& IPCW & Propensity Score
& 0.898 & 0.036 & -6.127 & $<0.001$
& 0.695 & 0.028 & -3.896 & $<0.001$ \\

& IPCW & Doubly Robust
& 0.900 & 0.037 & -6.179 & $<0.001$
& 0.702 & 0.029 & -3.971 & $<0.001$ \\

& Two-stage & 1-year
& 0.915 & 0.044 & -1.990 & 0.047
& 0.746 & 0.036 & -1.385 & 0.166 \\

& Two-stage & All data
& 0.931 & 0.055 & -2.939 & 0.003
& 0.793 & 0.047 & -1.801 & 0.072 \\
\midrule

\multirow{6}{*}{GNR or Chlamydia}
& Naïve & Adjust for $X,Z,W$
& 0.842 & 0.022 & -6.017 & $<0.001$
& 0.531 & 0.014 & -3.749 & $<0.001$ \\

& IPCW & Outcome Bridge
& 0.905 & 0.039 & -5.964 & $<0.001$
& 0.716 & 0.031 & -3.873 & $<0.001$ \\

& IPCW & Propensity Score
& 0.905 & 0.039 & -5.933 & $<0.001$
& 0.716 & 0.031 & -3.836 & $<0.001$ \\

& IPCW & Doubly Robust
& 0.905 & 0.039 & -5.340 & $<0.001$
& 0.716 & 0.031 & -3.873 & $<0.001$ \\

& Two-stage & 1-year
& 0.879 & 0.030 & -4.477 & $<0.001$
& 0.641 & 0.022 & -2.642 & 0.008 \\

& Two-stage & All data
& 0.898 & 0.036 & -4.639 & $<0.001$
& 0.695 & 0.028 & -2.920 & 0.004 \\

\bottomrule
\end{tabular}}
\end{table}

\subsection{Using Gonorrhea and Chlamydia as NCO}\label{supp:chlamNCO}

In this section, we present results when using rectal gonorrhea and rectal chlamydia as the NCO. Results with rectal chlamydia weren't included in the primary analysis since the logistic regression of $W\sim Z+\bmX$ suggested that it is not a good proxy of the unmeasured confounder $U$, and results with rectal gonorrhea were omitted from the primary analysis due to its low prevalence in both HPTN 083 and AMP. We present the results in Figure \ref{fig:gnr+chlam_placebo} and Table \ref{tab:gnr+chlam_placebo}. While the estimated counterfactual HIV cumulative incidence through 1 year when using rectal gonorrhea as the NCO are similar to those in the primary analysis, the confidence intervals are wider due to the low prevalence of rectal gonorrhea. 

\begin{figure}[!htbp]
    \centering
    \includegraphics[width=1\linewidth]{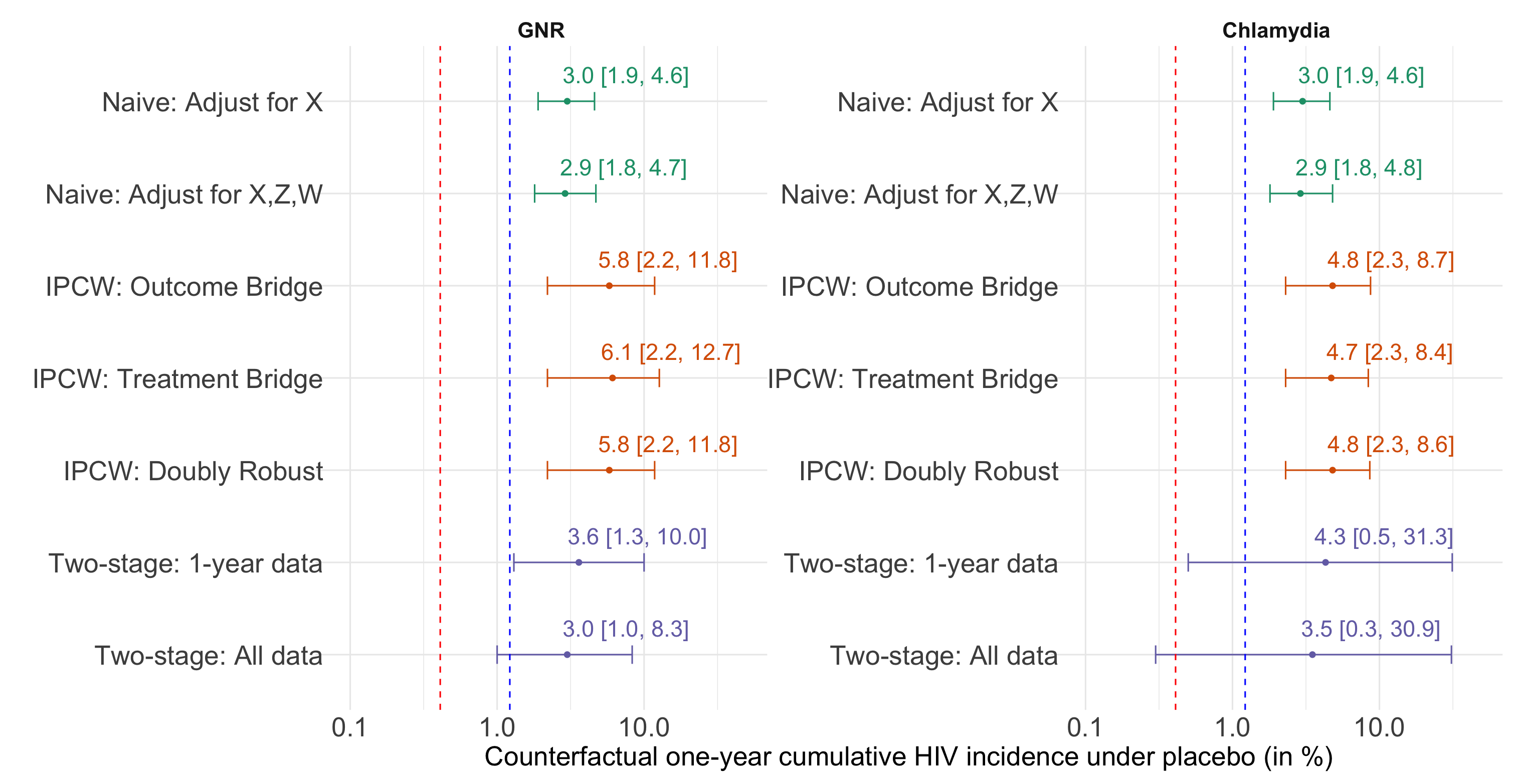}
    \caption{Estimated counterfactual HIV cumulative incidence through 1 year (\%) for HPTN 083 using different methods and rectal Gonorrhea and Chlamydia as the NCO. The red dashed line indicates the observed incidence rate in the Cabotegravir arm (0.41 per 100 person-years), and the blue dashed line indicates the observed incidence rate in the TDF/FTC arm (1.22 per 100 person-years). \textbf{Age, race and gender} were adjusted in all models. \textbf{Only the placebo arm} in the AMP data was used as the external data. The x-axis is plotted with the $\log()$ transformation for better visualization.}
    \label{fig:gnr+chlam_placebo}
\end{figure}

\begin{table}[!htbp]
\centering
\caption{Wald test statistics comparing HIV cumulative incidence through 1 year between active treatment arms and the estimated placebo arm. Statistical comparisons are presented for cabotegravir versus placebo and emtricitabine/tenofovir disoproxil fumarate (TDF/FTC) versus placebo across different analytical approaches. \textbf{Age, race and gender} were adjusted in all models. \textbf{Only the placebo arm} in the AMP data was used as the external data. }
\label{tab:gnr+chlam_placebo}
\resizebox{\textwidth}{!}{
\begin{tabular}{llccccccccc}
\toprule
{NCO} & {Method} & {Specification} & \multicolumn{4}{c}{{Cabotegravir}} & \multicolumn{4}{c}{{TDF/FTC}} \\ & & & {Rel. Efficacy} & {Abs. Efficacy} & {Test Stat.} & {p-value} & {Rel. Efficacy} & {Abs. Efficacy} & {Test Stat.} & {p-value}\\
\midrule

& Naïve & Adjust for $X$
& 0.863 & 0.026 & -5.593 & $<0.001$
& 0.593 & 0.018 & -3.292 & 0.001 \\\midrule

\multirow{6}{*}{GNR}
& Naïve & Adjust for $X,Z,W$
& 0.859 & 0.025 & -5.382 & $<0.001$
& 0.579 & 0.017 & -3.111 & 0.002 \\

& IPCW & Outcome bridge
& 0.929 & 0.054 & -4.212 & $<0.001$
& 0.790 & 0.046 & -2.859 & 0.004 \\

& IPCW & Propensity score
& 0.933 & 0.057 & -4.149 & $<0.001$
& 0.800 & 0.049 & -2.853 & 0.004 \\

& IPCW & Doubly robust
& 0.929 & 0.054 & -4.198 & $<0.001$
& 0.790 & 0.046 & -2.849 & 0.004 \\

& Two-stage & 1-year
& 0.886 & 0.032 & -2.853 & 0.004
& 0.661 & 0.024 & -1.877 & 0.061 \\

& Two-stage & All data
& 0.863 & 0.026 & -3.094 & 0.002
& 0.593 & 0.018 & -2.228 & 0.026 \\

\midrule

\multirow{6}{*}{Chlamydia}
& Naïve & Adjust for $X,Z,W$
& 0.859 & 0.025 & -5.177 & $<0.001$
& 0.579 & 0.017 & -2.900 & 0.004 \\

& IPCW & Outcome bridge
& 0.915 & 0.044 & -4.913 & $<0.001$
& 0.746 & 0.036 & -3.217 & 0.001 \\

& IPCW & Propensity score
& 0.913 & 0.043 & -4.927 & $<0.001$
& 0.740 & 0.035 & -3.202 & 0.001 \\

& IPCW & Doubly robust
& 0.915 & 0.044 & -4.923 & $<0.001$
& 0.746 & 0.036 & -3.222 & 0.001 \\

& Two-stage & 1-year
& 0.905 & 0.039 & -1.858 & 0.063
& 0.716 & 0.031 & -0.947 & 0.344 \\

& Two-stage & All data
& 0.883 & 0.031 & -1.817 & 0.069
& 0.651 & 0.023 & -0.931 & 0.352 \\

\bottomrule
\end{tabular}}
\end{table}

\subsection{Other covariates sets}\label{supp:covset}

In the primary analysis, we adjusted for age, race, gender, ethnicity, and education in models as covariates $\bmX$. In this section, we explore other adjustment sets: (1) age, race (Figure \ref{fig:agerace_placebo} and Table \ref{tab:agerace_placebo}); (2) age, race, gender, ethnicity, education (Figure \ref{fig:allcov_placebo} and Table \ref{tab:all_argee}). Note that as mentioned in Section \ref{sec:realdata}, a Cox model only showed significant associations between age and race and the outcome. We first performed regression $W\sim Z+\bmX$ to assess the strength of the NCOs. When using covariates set (1) as $\bmX$, the p-values for $Z$ is $0.143$ when using rectal chlamydia as the NCO while the p-values are less than 0.001 when using rectal gonorrhea or either STI as the NCOs. This suggests that the rectal chlamydia is a less reliable proxy of the unmeasured confounder compared to the other NCOs. Similarly, when using covariates set (2) as $\bmX$, the p-values are $0.029$, $0.138$, and $0.018$ when using rectal gonorrhea, rectal chlamydia, and either STI as the NCOs respectively. Additionally, the prevalence of rectal gonorrhea is low in both HPTN 083 and AMP, being 6.5\% and 3.7\% respectively. Hence, as in the primary analysis, using either STI combined is a better NCO. 

When adjusting for age and race, i.e. set (2), the estimation results using either STI as the NCO is similar to the primary analysis with slightly smaller confidence intervals. When adjusting for more covariates, i.e. set (2), the estimates for the counterfactual HIV cumulative incidence through 1 year are bigger with wider confidence intervals across all methods. This is possibly due to the low event rate we have and the estimation becomes unstable with 5 covariates. Across all covariate sets with either STI combined as the NCO, we see consistent statistically significant superiority of Cabotegravir over the placebo arm, while the results for TDF/FTC is mixed in the regression-based two-stage approach. 


\begin{figure}[!htbp]
    \centering
    \includegraphics[width=\linewidth]{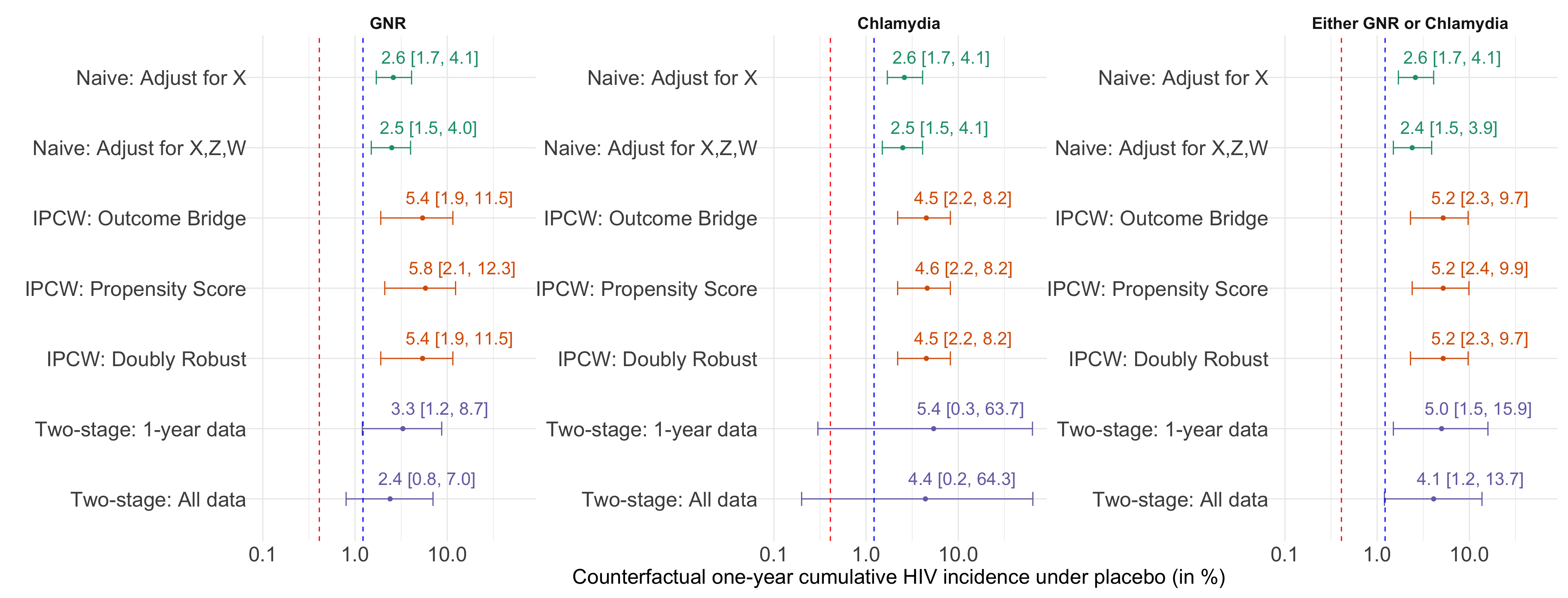}
    \caption{Estimated counterfactual HIV cumulative incidence through 1 year for HPTN 083 using different methods and definitions of the NCO. The red dashed line indicates the observed incidence rate in the Cabotegravir arm (0.41 per 100 person-years), and the blue dashed line indicates the observed incidence rate in the TDF/FTC arm (1.22 per 100 person-years). \textbf{Age and race} were adjusted in all models. \textbf{Only the placebo arm} in AMP was used as the external data. The x-axis is plotted on the $\log()$ transformation for better visualization.
    }
    \label{fig:agerace_placebo}
\end{figure}

\begin{table}[!htbp]
\centering
\caption{Wald test statistics comparing HIV cumulative incidence through 1 year between active treatment arms and the estimated placebo arm. Statistical comparisons are presented for cabotegravir versus placebo and emtricitabine/tenofovir disoproxil fumarate (TDF/FTC) versus placebo across different analytical approaches. \textbf{Age and race} were adjusted in all models. \textbf{Only the placebo arm} in the AMP data was used as the external data. }
\label{tab:agerace_placebo}
\resizebox{\textwidth}{!}{
\begin{tabular}{ccccccccccc}
\toprule
{NCO} & {Method} & {Specification} &
\multicolumn{4}{c}{{Cabotegravir}} &
\multicolumn{4}{c}{{TDF/FTC}} \\
 &  &  &
{Rel. Efficacy (\%)} & {Abs. Efficacy (\%)} & {Test Stat.} & {p-value} &
{Rel. Efficacy (\%)} & {Abs. Efficacy (\%)} & {Test Stat.} & {p-value} \\
\midrule

& Naïve & Adjust for $X$
& 0.842 & 0.022 & -5.252 & $<0.001$
& 0.531 & 0.014 & -2.842 & 0.004 \\
\midrule

\multirow{6}{*}{GNR}
& Naïve & Adjust for $X,Z,W$
& 0.836 & 0.021 & -4.874 & $<0.001$
& 0.512 & 0.013 & -2.480 & 0.013 \\

& IPCW & Outcome Bridge
& 0.924 & 0.050 & -3.922 & $<0.001$
& 0.774 & 0.042 & -2.609 & 0.009 \\

& IPCW & Propensity Score
& 0.929 & 0.054 & -4.003 & $<0.001$
& 0.790 & 0.046 & -2.715 & 0.007 \\

& IPCW & Doubly Robust
& 0.924 & 0.050 & -3.919 & $<0.001$
& 0.774 & 0.042 & -2.606 & 0.009 \\

& Two-stage & 1-year
& 0.876 & 0.029 & -2.964 & 0.003
& 0.630 & 0.021 & -1.716 & 0.086 \\

& Two-stage & All data
& 0.829 & 0.020 & -3.118 & 0.002
& 0.492 & 0.012 & -1.860 & 0.063 \\
\midrule

\multirow{6}{*}{Chlamydia}
& Naïve & Adjust for $X,Z,W$
& 0.836 & 0.021 & -4.811 & $<0.001$
& 0.512 & 0.013 & -2.442 & 0.015 \\

& IPCW & Outcome Bridge
& 0.909 & 0.041 & -4.788 & $<0.001$
& 0.729 & 0.033 & -3.077 & 0.002 \\

& IPCW & Propensity Score
& 0.911 & 0.042 & -4.864 & $<0.001$
& 0.735 & 0.034 & -3.134 & 0.002 \\

& IPCW & Doubly Robust
& 0.909 & 0.041 & -4.810 & $<0.001$
& 0.729 & 0.033 & -3.091 & 0.002 \\

& Two-stage & 1-year
& 0.924 & 0.050 & -1.605 & 0.108
& 0.774 & 0.042 & -0.888 & 0.375 \\

& Two-stage & All data
& 0.907 & 0.040 & -1.492 & 0.136
& 0.723 & 0.032 & -0.860 & 0.390 \\
\midrule

\multirow{6}{*}{GNR or Chlamydia}
& Naïve & Adjust for $X,Z,W$
& 0.829 & 0.020 & -4.780 & $<0.001$
& 0.492 & 0.012 & -2.381 & 0.017 \\

& IPCW & Outcome Bridge
& 0.921 & 0.048 & -4.703 & $<0.001$
& 0.765 & 0.040 & -3.129 & 0.002 \\

& IPCW & Propensity Score
& 0.921 & 0.048 & -4.710 & $<0.001$
& 0.765 & 0.040 & -3.146 & 0.002 \\

& IPCW & Doubly Robust
& 0.921 & 0.048 & -4.702 & $<0.001$
& 0.765 & 0.040 & -3.125 & 0.002 \\

& Two-stage & 1-year
& 0.918 & 0.046 & -3.676 & $<0.001$
& 0.756 & 0.038 & -2.116 & 0.034 \\

& Two-stage & All data
& 0.900 & 0.037 & -3.509 & $<0.001$
& 0.702 & 0.029 & -2.055 & 0.040 \\
\bottomrule
\end{tabular}}
\end{table}

\begin{figure}[!htbp]
    \centering
    \includegraphics[width=\linewidth]{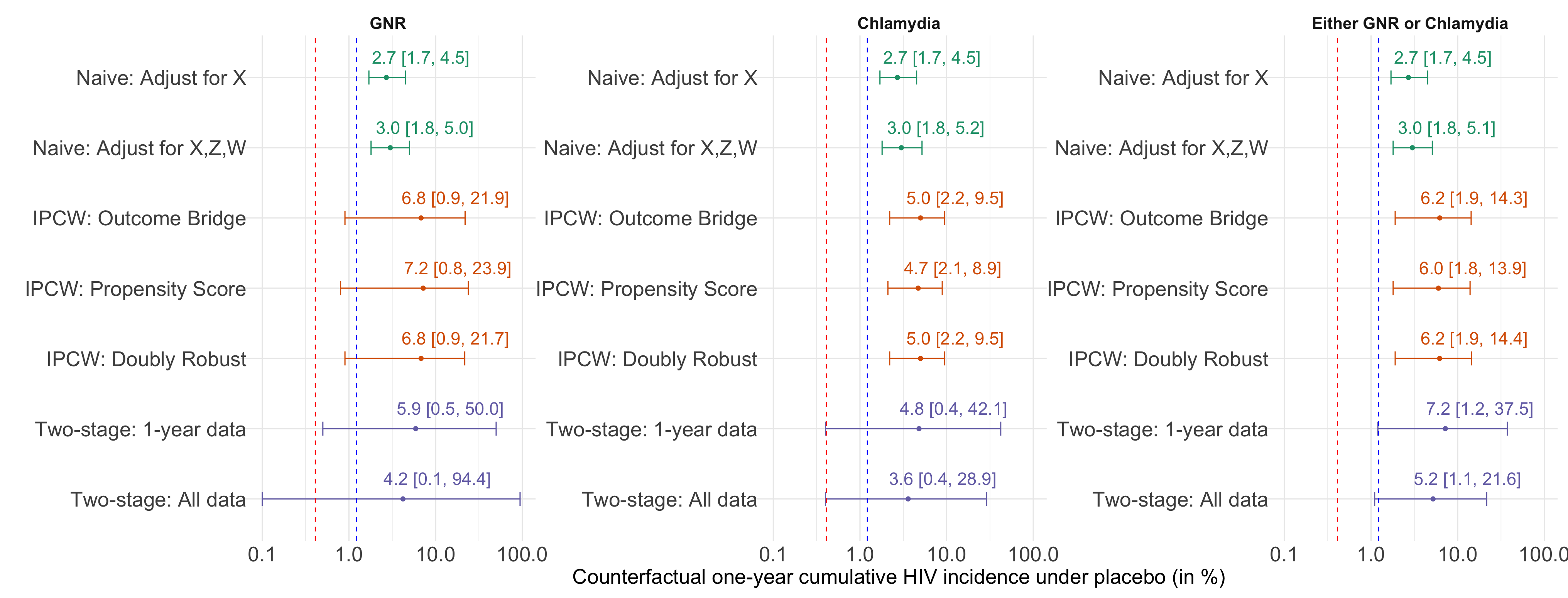}
    \caption{Estimated counterfactual HIV cumulative incidence through 1 year (\%) for HPTN 083 using different methods and definitions of the NCO. The red dashed line indicates the observed incidence rate in the Cabotegravir arm (0.41 per 100 person-years), and the blue dashed line indicates the observed incidence rate in the TDF/FTC arm (1.22 per 100 person-years). \textbf{Age, race, gender, ethnicity, and education} were adjusted in all models. \textbf{Only the placebo arm} in AMP was used as the external data. The x-axis is plotted on the $\log()$ transformation for better visualization.
    }
    \label{fig:allcov_placebo}
\end{figure}

\begin{table}[!htbp]
\centering
\caption{Wald test statistics comparing HIV cumulative incidence through 1 year between active treatment arms and the estimated placebo arm. Statistical comparisons are presented for cabotegravir versus placebo and emtricitabine/tenofovir disoproxil fumarate (TDF/FTC) versus placebo across different analytical approaches. \textbf{Age, race, gender, ethnicity, and education} were adjusted in all models. \textbf{Only the placebo arm} in the AMP data was used as the external data. }
\label{tab:all_argee}
\resizebox{\textwidth}{!}{
\begin{tabular}{ccccccccccc}
\toprule
{NCO} & {Method} & {Specification} &
\multicolumn{4}{c}{{Cabotegravir}} &
\multicolumn{4}{c}{{TDF/FTC}} \\
 &  &  &
{Rel. Efficacy (\%)} & {Abs. Efficacy (\%)} & {Test Stat.} & {p-value} &
{Rel. Efficacy (\%)} & {Abs. Efficacy (\%)} & {Test Stat.} & {p-value} \\
\midrule

& Naïve & Adjust for $X$
& 0.848 & 0.023 & -5.091 & $<0.001$
& 0.548 & 0.015 & -2.737 & 0.006 \\
\midrule

\multirow{6}{*}{GNR}
& Naïve & Adjust for $X,Z,W$
& 0.863 & 0.026 & -5.328 & $<0.001$
& 0.593 & 0.018 & -3.088 & 0.002 \\

& IPCW & Outcome Bridge
& 0.940 & 0.064 & -2.438 & 0.015
& 0.821 & 0.056 & -1.696 & 0.090 \\

& IPCW & Propensity Score
& 0.943 & 0.068 & -2.360 & 0.018
& 0.831 & 0.060 & -1.663 & 0.096 \\

& IPCW & Doubly Robust
& 0.940 & 0.064 & -2.456 & 0.014
& 0.821 & 0.056 & -1.710 & 0.087 \\

& Two-stage & 1-year
& 0.931 & 0.055 & -1.462 & 0.144
& 0.793 & 0.047 & -0.855 & 0.393 \\

& Two-stage & All data
& 0.902 & 0.038 & -2.628 & 0.009
& 0.710 & 0.030 & -1.554 & 0.120 \\
\midrule

\multirow{6}{*}{Chlamydia}
& Naïve & Adjust for $X,Z,W$
& 0.863 & 0.026 & -5.097 & $<0.001$
& 0.593 & 0.018 & -2.872 & 0.004 \\

& IPCW & Outcome Bridge
& 0.918 & 0.046 & -4.526 & $<0.001$
& 0.756 & 0.038 & -2.970 & 0.003 \\

& IPCW & Propensity Score
& 0.913 & 0.043 & -4.505 & $<0.001$
& 0.740 & 0.035 & -2.904 & 0.004 \\

& IPCW & Doubly Robust
& 0.918 & 0.046 & -4.534 & $<0.001$
& 0.756 & 0.038 & -2.976 & 0.003 \\

& Two-stage & 1-year
& 0.915 & 0.044 & -2.031 & 0.042
& 0.746 & 0.036 & -1.053 & 0.292 \\

& Two-stage & All data
& 0.886 & 0.032 & -1.942 & 0.052
& 0.661 & 0.024 & -1.028 & 0.304 \\
\midrule

\multirow{6}{*}{GNR or Chlamydia}
& Naïve & Adjust for $X,Z,W$
& 0.863 & 0.026 & -5.104 & $<0.001$
& 0.593 & 0.018 & -2.882 & 0.004 \\

& IPCW & Outcome Bridge
& 0.934 & 0.058 & -3.640 & $<0.001$
& 0.803 & 0.050 & -2.503 & 0.012 \\

& IPCW & Propensity Score
& 0.932 & 0.056 & -3.587 & $<0.001$
& 0.797 & 0.048 & -2.443 & 0.015 \\

& IPCW & Doubly Robust
& 0.934 & 0.058 & -3.606 & $<0.001$
& 0.803 & 0.050 & -2.480 & 0.013 \\

& Two-stage & 1-year
& 0.943 & 0.068 & -3.389 & 0.001
& 0.831 & 0.060 & -2.091 & 0.037 \\

& Two-stage & All data
& 0.921 & 0.048 & -3.105 & 0.002
& 0.765 & 0.040 & -1.932 & 0.053 \\
\bottomrule
\end{tabular}}
\end{table}

\section{Technical Proofs} 

\subsection{Proof of Theorem \ref{theo: censoring}}\label{proof:theo1}
\begin{proof}
    (a) We first show the identification formula using the outcome bridge function $ h $. First, we note that 
    \begin{align*}
    & E\left[\frac{\Delta I(T^*\leq t)}{P(C>T^*\mid Z, \bmX, R=1)}\mid Z, \bmX,R=1\right]\\
    = & E\left[ \frac{E[\Delta\mid T^*, Z, \bmX, R=1]I(T^*\leq t)}{P(C>T^*\mid Z, \bmX, R=1)}\mid Z, \bmX, R=1\right]\\
    = & E\left[ \frac{P(C>T^*\mid T^*, Z, \bmX, R=1)I(T^*\leq t)}{P(C>T^*\mid Z, \bmX, R=1)}\mid Z, \bmX, R=1\right]\\
     = & E\left[ \frac{P(C>T^*\mid Z, \bmX, R=1)I(T^*\leq t)}{P(C>T^*\mid Z, \bmX, R=1)}\mid Z, \bmX, R=1\right]\\
     = & E[I(T^*\leq t) \mid Z, \bmX, R=1]\\
     = & E[I(T^*(0)\leq t)\mid Z, \bmX, R=1]
\end{align*}
The first equality is by the tower property of expectation; the second equality is by definition; the third equality is by assumption \ref{assump: censoring}; the fourth equality is by algebra; and the last equality is by consistency. 
From  the equation \eqref{eq: h.c} in Theorem 1 we have 
	\begin{align*}
		E\left[\frac{\Delta I(T^*\leq t)}{P(C>T^*\mid Z, \bmX, R=1)}\mid Z, \bmX,R=1\right]=\int h(w, \bmx) dP(w\mid Z=z, R=1, \bmX=\bmx).
	\end{align*}

    Thus, 
    \begin{align*}
		E[I(T^*(0)\leq t)\mid Z=z, R=1, \bmX=\bmx] =\int h(w, \bmx) dP(w\mid Z=z, R=1, \bmX=\bmx), 
	\end{align*} holds for every $z, \bm x$. Note that left-hand side equals
	\begin{align*}
			&E[I(T^*(0)\leq t)\mid Z=z, R=1, \bmX=\bmx]  \\
			& = \int  E[I(T^*(0)\leq t)\mid U_b=u, Z=z, R=1, \bmX=\bmx] dP(u \mid Z=z, R=1, \bmX=\bmx ),
\end{align*}
and the right-hand side equals 
	\begin{align*}
		&  \int h(w, \bmx) dP(w\mid Z=z, R=1, \bmX=\bmx) \\
	&= \int \int h(w, \bmx) dP(w\mid U_b=u, Z=z, R=1, \bmX=\bmx)  dP(u \mid Z=z, R=1, \bmX=\bmx ) \\
	&= \int E[ h(W, \bmx) \mid U_b=u, Z=z, R=1, \bmX=\bmx  ] dP(u \mid Z=z, R=1, \bmX=\bmx ). 
\end{align*}
From the completeness assumption \ref{assump: completeness}(a), we have  $ E[I(T^*(0)\leq t) \mid U_b=u,Z=z, R=1, \bmX=\bmx] = E[ h(W, \bmx) \mid U_b=u, Z=z, R=1, \bmX=\bmx  ]  $ for any $ u, \bmx, z $ with positive density in $ P(u, \bmx, z\mid R=1) $.  

Then, by $ (T^*(0), W) \indep (Z, R) \mid U_b, \bmX $,  we have $ E[I(T^*(0)\leq t) \mid U_b=u, R=0, \bmX=\bmx] = E[ h(W, \bmx) \mid U_b=u,  R=0, \bmX=\bmx  ]  $ for any $ u, \bmx $ with positive density in $ P(x\mid R=1) $. From Assumption \ref{assump: positivity}, $ P( u, \bmx\mid R=0)>0 $ implies $ P(u, \bmx\mid R=1)>0 $. Thus, 
\begin{align*} 
	& P(T^*(0)\leq t) \mid R=0]\\
 &= \int E[I(T^*(0)\leq t) \mid U_b=u, R=0, \bmX=\bmx]  dP(u, \bmx\mid R=0)  \\
  &= \int E[ h(W, \bmx) \mid U_b=u, R=0, \bmX=\bmx]  dP(u, \bmx\mid R=0)  \\
   &=E[ h(W, \bmX) \mid   R=0 ] 
\end{align*}

Next, We show the identification formula using the exposure bridge function $ q $. From  $ W\indep (Z, R) \mid U_b, \bmX $ and the equation \eqref{eq: q} in Theorem 1, we have
 	\begin{align}
 \frac{P(R=0\mid W=w, \bmX=\bmx)}{P(R=1\mid W=w, \bmX=\bmx)}= \int q(z, \bmx)dP(z\mid  W=w, R=1, \bmX=\bmx) \nonumber
\end{align} holds for every $w, \bm x$.
In this equation, the left-hand side equals 
\begin{align*}
	& \frac{P(R=0\mid W=w, \bmX=\bmx)}{P(R=1\mid W=w, \bmX=\bmx)} \\
	&=  \frac{1}{P(R=1\mid W=w, \bmX=\bmx)} -1  \\
	&= \int  \frac{1}{P(R=1\mid U_b=u, W=w, \bmX=\bmx)}  d P(u\mid W=w, \bmX=\bmx, R=1)-1     \\
	&= \int  \left\{ \frac{1}{P(R=1\mid U_b=u, \bmX=\bmx)}   -1 \right\} d P(u\mid W=w, \bmX=\bmx, R=1) ,
\end{align*}
and the right-hand side equals  
\begin{align*}
 &  \int q(z, \bmx)dP(z\mid  W=w, R=1, \bmX=\bmx) \nonumber \\
 &=  \int \int q(z, \bmx) dP(z\mid U_b=u, R=1, \bmX=\bmx)  dP(u\mid  W=w,  \bmX=\bmx, R=1) \nonumber . 
\end{align*}
From the completeness assumption \ref{assump: completeness}(b), we have 
\begin{align*}
\frac{P(R=0\mid U_b, \bmX=\bmx)}{P(R=1\mid U_b, \bmX=\bmx)}   =  \int q(z, \bmx) dP(z\mid U_b,  R=1, \bmX=\bmx) 
\end{align*}
almost surely. Similarly from the above outcome bridge derivation, we can also show that 
$E[RI(T^*(0)\leq t)q(Z,X)]=E\left[\frac{Rq(Z,X)\Delta I(T^*\leq t)}{P(C>T^*\mid Z,X,R=1)}\right]$.
Thus, by $ T^*(0) \indep (R,Z)\mid (U_b, \bmX) $, 
\begin{align*}
    & \frac{1}{P(R=0)}	E\left[\frac{Rq(Z,X)\Delta I(T^*\leq t)}{P(C>T^*\mid Z,X,R=1)}\right] \\
 &= \frac{1}{P(R=0)}	E[ R  I(T^*(0)\leq t) q(Z, \bmX) ] \\
 &=   \frac{1}{P(R=0)} \int 	E[ RI(T^*(0)\leq t)  q(Z, \bmX)  \mid \bmX=\bmx] dP(\bmx) \\
 &=   \frac{1}{P(R=0)} \int 	E[  E[RI(T^*(0)\leq t)  q(Z, \bmX)\mid U_b,\bmX]  \mid \bmX=\bmx] dP(\bmx) \\
&=   \frac{1}{P(R=0)} \int 	E[  E[I(T^*(0)\leq t)\mid U_b, \bmX]  E[R q(Z, \bmX)\mid U_b,\bmX]  \mid \bmX=\bmx] dP(\bmx) \\
&=   \frac{1}{P(R=0)} \int 	E[  E[I(T^*(0)\leq t)\mid U_b, \bmX]  P(R=1 \mid U_b,\bmX)  E[q(Z, \bmX)\mid U_b,\bmX, R=1] \mid \bmX=\bmx] dP(\bmx) \\
 &=  \frac{1}{P(R=0)} \int 	E\left[  E[I(T^*(0)\leq t)  \mid U_b, \bmX]  P(R=0\mid U_b, \bmX)  \mid \bmX=\bmx \right] dP(\bmx) \\
 &=  \frac{1}{P(R=0)} \int 	E\left[  E[(1-R)I(T^*(0)\leq t)  \mid U_b, \bmX]   \mid \bmX=\bmx \right] dP(\bmx) \\
 &=  \frac{1}{P(R=0)} \int 	E\left[ (1-R) I(T^*(0)\leq t)  \mid \bmX=\bmx \right] dP (\bmx) \\  
  &=  E\left[  I(T^*(0)\leq t)  \mid R=0 \right] .
\end{align*}

(b) Then, we show the double-robust property of the identification formula \eqref{eq:doublerobust}.

First, suppose $h(W,\bmX)$ is correctly specified, but $q(Z,\bmX)$ might be wrong. The right-hand side of equation \eqref{eq:doublerobust} can be written as $$\underbrace{\frac{1}{P(R=0)}E\left[Rq(Z,\bmX)\left(\frac{\Delta I(T^*\leq t)}{P(C>T^*\mid Z,\bmX,R=1)}-h(W,\bmX)\right)\right]}_A+\underbrace{\frac{1}{P(R=0)}E\left[(1-R)h(W,\bmX)\right]}_B.$$
Notice that we have $B=E[h(W,\bmX)\mid R=0)=P(T^*(0)\leq t\mid R=0)$ by (a). Then, it remains to show $A=0$, which follows from the equation \eqref{eq: h.c} in Theorem 1. We have 
\begin{align*}
    &E\left[Rq(Z,\bmX)\left(\frac{\Delta I(T^*\leq t)}{P(C>T^*\mid Z,\bmX,R=1)}-h(W,\bmX)\right)\right]\\
    &= E\left[E\left[Rq(Z,\bmX)\left(\frac{\Delta I(T^*\leq t)}{P(C>T^*\mid Z,\bmX,R=1)}-h(W,\bmX)\right)\mid Z,X,R=1\right]\right]\\
    &=E\left[q(Z,\bmX)E\left[\left(\frac{\Delta I(T^*\leq t)}{P(C>T^*\mid Z,\bmX,R=1)}-h(W,\bmX)\right)\mid Z,X,R=1\right]\right]\\
    &=0
\end{align*}
regardless of whether $q(Z,\bmX)$ is correctly specified or not since $$E\left[\left(\frac{\Delta I(T^*\leq t)}{P(C>T^*\mid Z,\bmX,R=1)}-h(W,\bmX)\right)\mid Z,X,R=1\right]=0$$ by the equation \eqref{eq: h.c} in Theorem 1.

Then, suppose $q(Z, \bmX)$ is correct, but $h(W,\bmX)$ might be wrong. The right-hand side of the equation \eqref{eq:doublerobust} in Theorem 1 can be written as {\small $$\underbrace{\frac{1}{P(R=0)}E\left[Rq(Z,\bmX)\frac{\Delta I(T^*\leq t)}{P(C>T^*\mid Z,\bmX,R=1)}\right]}_{A'}+\underbrace{\frac{1}{P(R=0)}E\left[-Rq(Z,\bmX)h(W,\bmX)+(1-R)h(W,\bmX)\right]}_{B'}.$$}

Notice that we have $A'=P(T^*(0)\leq t\mid R=0)$ by part (a) of this proof. Then, it remains to show $B'=0$, which follows from the equation \eqref{eq: q} in Theorem 1. We have 
\begin{align*}
    &E[Rq(Z,\bmX)h(W,\bmX)]\\
    =&E\left[E\left[Rq(Z,\bmX)h(W,\bmX)\mid W, \bmX\right]\right]\\
    =& E[h(W,\bmX)E[Rq(Z,\bmX)\mid W,\bmX]]\\
    =& E[h(W,\bmX)P(R=1\mid R,\bmX)E[q(Z,\bmX)\mid W,\bmX, R=1]]\\
    =& E\left[h(W,\bmX)P(R=1\mid W,\bmX)\frac{P(R=0\mid W,\bmX)}{P(R=1\mid W, \bmX)}\right]\\
    =& E[h(W,\bmX)P(R=0\mid W, \bmX)]\\
    =& E[(1-R)h(W,\bmX)],
\end{align*}
where the fourth equality is by the equation \eqref{eq: q} in Theorem 1 and thus $B'=0$.

\end{proof}


\subsection{Proof of Theorem \ref{theo:twostage}}\label{proof:theo2}
\begin{proof}
We start with the model
\begin{align}
    \lambda(t|R,Z,X,U) = \lambda_0(t)exp(\beta_{Tx}X + \beta_{Tu} U) \nonumber
\end{align}
When the event rate is low, we have $S(t|R,U,X) \approx 1$ for all $t$.
    
Then, the density function $f(t|R,U,X) \approx \lambda(t|R,U,X)$.
    
Note that
\begin{align*}
    f(t|R,Z,X) & = E[f(t|R,U,X,Z)|R,Z,X] \\
    & = E[f(t|R,U,X)|R,Z,X] \\
    & \approx E[\lambda(t|R,U,X)|R,Z,X] \\
    & = \lambda_0(t) exp(\beta_{Tx}X) E[exp(\beta_{Tu}U)|R,Z,X]
\end{align*}

    
Assume $U = E[U|R,Z,X] + \epsilon$ where $\epsilon \indep (R,Z,X)$ and $E(\epsilon) = 0$.

Then, $f(t|R,Z,X) \approx \tilde \lambda_0(t) exp (\beta_{Tx}X + \beta_{Tu}E[U|R,Z,X])$ where $\tilde \lambda_0(t) = \lambda_0(t)E[exp(\beta_{Tu}\epsilon)]$

It remains to identify $E(U|R,Z,X)$. We can use our negative control outcome $W$. Assume
\begin{align}
    E(W|R,U,X) =  exp(\beta_{W0} + \beta_{Wx}X + \beta_{Wu}U). \nonumber
\end{align}

Note that
\begin{align*}
    E(W|R,Z,X) & = E[E(W|R,U,Z,X)|R,Z,X] \\
    & = E[E(W|R,U,X)|R,Z,X] \\
    & = E[exp(\beta_{W0} + \beta_{Wx}X + \beta_{Wu}U)|R,Z,X] \\
    & = exp(\beta_{W0}+\beta_{Wx}X)E[exp(\beta_{Wu}U)|R,Z,X]
\end{align*}

Again, using the assumption that $U = E[U|R,Z,X] + \epsilon$ where $\epsilon \indep (R,Z,X)$ and $E(\epsilon) = 0$, we have
\begin{align}
    E[W|R,Z,X] = exp(\tilde \beta_{W0} + \beta_{Wx}X + \beta_{Wu} E(U|R,Z,X)) \nonumber
\end{align}
where $\tilde \beta_{W0} = \beta_{W0} + log(E[exp(\beta_{Wu}\epsilon)])$.

Hence, $E(U|R,Z,X)$ can be expressed as a linear combination of $X$ and $log(E[W|R,Z,X])$.

This implies,
\begin{align}
    f(t|R,Z,X) \approx \tilde \lambda_0(t) exp[\beta_{Tx}^*X + \beta_{Tw}^* log(E(W|R,Z,X))] \nonumber
\end{align}
 and hence
\begin{align}
    \lambda(t|R,Z,X) \approx \tilde \lambda_0(t) exp[\tilde{\beta}_{Tx}X + \tilde{\beta}_{Tw} log(E(W|R,Z,X))]
\end{align}
where $\tilde \lambda_0(t)$, $\tilde{\beta}_{Tx}$ and $\tilde{\beta}_{Tw}$ are unknown parameters.
\end{proof}

\section{Sensitivity analysis on violation of Assumption \ref{assump: negative controls}$^*$}\label{supp:sensi_binU}

Suppose the underlying $U$ is continuous. However, in order for completeness assumption to hold when there are only binary $Z$ and $W$, we have to assume there is a binary $U_b$ which is a coarsened version of $U$. To assess how sensitive our identification is to this coarsening assumption. We develop the following bound
\begin{align*}
    \frac{1}{\Gamma_1^2} E[h(W,X)|S=0] \le E[Y(0)|S=0] \le \Gamma_1^2 E[h(W,X)|S=0],
\end{align*}
where we assume $\frac{1}{\Gamma_1} \le OR(P(R=0|u_b,x,z),P(R=0|u,z,x)\le \Gamma_1$ holds for a.e. $u,z,x$.

Assumptions needed:
\begin{itemize}
    \item Assumption 1 about negative controls: $ (Z, R)  \indep (Y(0), W) \mid \bmX, U$
    \item Assumption $3^*$ (completeness). (a) For any $ \bmx $ with positive density in the external dataset and any square-integrable function $g$, $ E[g(U_b)\mid Z, R=1, \bmX=\bmx] =0$ almost surely implies  $ g(U_b)=0 $ almost surely. 
    \item $h$ solves $E[h(W,X)|z,x,R=1] = E[Y(0)|x,z,R=1]$ for a.e. $x, z$
\end{itemize}

Note that under Assumption $3^*$, $h$ also satisfies
\begin{align*}
    E[h(W,X)|u_b, z,x,R=1] = E[Y(0)|u_b,z,x,R=1]
\end{align*}

Then,
\begin{align*}
    E[Y(0)|u_b,z,x,R=0] & = \sum_uE[Y(0)|u,u_b,z,x,R=0]P(u|u_b,z,x,R=0) \\
    & = \sum_uE[Y(0)|u,z,x,R=0]P(u|u_b,z,x,R=0) \\
    & = \sum_uE[Y(0)|u,z,x,R=1]P(u|u_b,z,x,R=0) \\
    & = \sum_uE[Y(0)|u,z,x,R=1]P(u|u_b,z,x,R=1)\frac{P(u|u_b,z,x,R=0)}{P(u|u_b,z,x,R=1)}
\end{align*}

Let $r(u,u_b,z,x) = \frac{P(u|u_b,z,x,R=0)}{P(u|u_b,z,x,R=1)}$.
\begin{align*}
    r(u,u_b,z,x) & = \frac{P(u,u_b,z,x,R=0)/P(u_b,z,x,R=0)}{P(u,u_b,z,x,R=1)/P(u_b,z,x,R=1)} \\
    & = \frac{P(R=0|u,z,x)/P(R=0|u_b,z,x)}{P(R=1|u,z,x)/P(R=1|u_b,z,x)} \\
    & = OR(P(R=0|u,z,x), P(R=0|u_b,z,x)),
\end{align*}
which by assumption is bounded between $\frac{1}{\Gamma_1}$ and $\Gamma_1$.

Hence, 
\begin{align*}
    E[Y(0)|u_b,z,x,R=0] & \le \Gamma_1 \sum_uE[Y(0)|u,z,x,R=1]P(u|u_b,z,x,R=1) \\
    & = \Gamma_1 E[Y(0)|u_b,z,x,R=1]
\end{align*}
Similarly, we have 
\begin{align*}
    E[Y(0)|u_b,z,x,R=0] & \ge \Gamma_1 \sum_uE[Y(0)|u,z,x,R=1]P(u|u_b,z,x,R=1) \\
    & = \frac{1}{\Gamma_1} E[Y(0)|u_b,z,x,R=1]
\end{align*}
Thus, 
\begin{align}
    \frac{1}{\Gamma_1} E[Y(0)|u_b,z,x,R=1]\le E[Y(0)|u_b,z,x,R=0] \le \Gamma_1 E[Y(0)|u_b,z,x,R=1] \label{sens: bd with z}
\end{align}

Next, we derive a bound on $E[Y(0)|S=0]$.

{\small \begin{align*}
    E[Y(0)|R=0] & = \sum_{u_b,z,x} E[Y(0)|u_b,z,x,R=0]P(u_b,z,x|R=0) \\
    & \le \Gamma_1 \sum_{u_b,z,x} E[Y(0)|u_b,z,x,R=1]P(u_b,z,x|R=0) \\
    & = \Gamma_1 \sum_{u_b,z,x} E[h(W,X)|u_b,z,x,R=1]P(u_b,z,x|R=0) \\
    & = \Gamma_1 \sum_{u_b,z,x}\sum_u E[h(W,X)|u,u_b,z,x,R=1]P(u|u_b,z,x,R=1)P(u_b,z,x|R=0) \\ 
    & = \Gamma_1 \sum_{u_b,z,x}\sum_u E[h(W,X)|u,z,x,R=1]P(u|u_b,z,x,R=1)P(u_b,z,x|R=0) \\ 
    & = \Gamma_1 \sum_{u_b,z,x}\sum_u E[h(W,X)|u,z,x,R=0]P(u|u_b,z,x,R=1)P(u_b,z,x|R=0) \\ 
    & = \Gamma_1 \sum_{u_b,z,x}\sum_u E[h(W,X)|u,z,x,R=0]P(u|u_b,z,x,R=0)P(u_b,z,x|R=0)\frac{P(u|u_b,z,x,R=1)}{P(u|u_b,z,x,R=0)} \\ 
    & = \Gamma_1 \sum_{u_b,z,x}\sum_u E[h(W,X)|u,z,x,R=0]P(u,z,x|R=0)\frac{P(u|u_b,z,x,R=1)}{P(u|u_b,z,x,R=0)} \\ 
\end{align*}}
Note that $\frac{P(u|u_b,z,x,R=1)}{P(u|u_b,z,x,R=0)} = \frac{1}{r(u,u_b,z,x)}$ is also bounded by $\frac{1}{\Gamma_1}$ and $\Gamma_1$.

Hence, 
\begin{align*}
    E[Y(0)|R=0] & \le \Gamma_1^2 \sum_{u_b,z,x}\sum_u E[h(W,X)|u,z,x,R=0]P(u,z,x|R=0) \\
    & = \Gamma_1^2 E[h(W,X)|R=0]
\end{align*}

Similarly, we have
\begin{align*}
    E[Y(0)|R=0] & \ge \Gamma_1^2 \sum_{u_b,z,x}\sum_u E[h(W,X)|u,z,x,R=0]P(u,z,x,R=0) \\
    & = \frac{1}{\Gamma_1^2} E[h(W,X)|R=0]
\end{align*}
and therefore,
\begin{align*}
    \frac{1}{\Gamma_1^2} E[h(W,X)|R=0]\le E[Y(0)|R=0] \le \Gamma_1^2 E[h(W,X)|R=0]
\end{align*}

We now develop a similar result based on identification using treatment bridge function $q$. Specifically, 
\begin{align*}
    \frac{1}{\Gamma_2^2} \frac{1}{P(R=0)}E[I(R=1)q(Z,X)Y] \le E[Y(0)|S=0] \le \Gamma_2^2 \frac{1}{P(R=0)}E[I(R=1)q(Z,X)Y],
\end{align*}
where we assume $\frac{1}{\Gamma_2} \le OR(P(R=0|u_b,w,x),P(R=0|u,w,x)\le \Gamma_2$ and 
\begin{align*}
 \frac{1}{\Gamma_2} \le \frac{E[Y(0)|u_b,w,x,R=1]}{E[Y(0)|u,u_b,w,x,R=1]} \le \Gamma_2
\end{align*}
holds for a.e. $u,w,x$.

Assumptions needed:
\begin{itemize}
    \item Assumption 1 about negative controls: $ (Z, R)  \indep (Y(0), W) \mid \bmX, U$
    \item Assumption $3^*$ (completeness). (a) For any $ \bmx $ with positive density in the external dataset and any square-integrable function $g$, $ E[g(U_b)\mid W, R=1, \bmX=\bmx] =0$ almost surely implies  $ g(U_b)=0 $ almost surely. 
    \item $q$ solves $\frac{P(R=0|w,x)}{P(R=1|w,x)} = E[q(Z,X)|w,x,R=1] ]$ for a.e. $x, w$
\end{itemize}

By assumption, $q$ solves 
\begin{align}
    \frac{P(R=0|W,X)}{P(R=1|W,X)} = E[q(Z,X)|W,X,R=1] ]\label{equation for q}
\end{align}
which is equivalent to
\begin{align*}
    P(R=0|W,X) = E[I(R=1)q(Z,X)|W,X]
\end{align*}

Under the assumption $3*$, $q$ also satisfies
\begin{align*}
    P(R=0|U_b,W,X) = E[I(R=1)q(Z,X)|U_b,W,X]
\end{align*}

The reason is as follows. The LHS of \ref{equation for q}
\begin{align*}
    \frac{P(R=0|W,X)}{P(R=1|W,X)} & = 1-\frac{1}{P(R=1|W,X)} \\
    & = 1 - \sum_{u_b} \frac{1}{P(R=1|u_b,W,X)} P(u_b|W,X,R=1) \\
    & = \sum_{u_b} \frac{P(R=0|u_b,W,X)}{P(R=1|u_b,W,X)} P(u_b|W,X,R=1) \\
    & = E[\frac{P(R=0|u_b,W,X)}{P(R=1|u_b,W,X)}|W,X,R=1]
\end{align*}
The RHS of \ref{equation for q} is
\begin{align*}
    E[q(Z,X)|W,X,R=1] = E[E[q(Z,X)|U_b,W,X,R=1]|W,X,R=1]
\end{align*}
By the completeness assumption based on $U_b$, we conclude 
\begin{align*}
    \frac{P(R=0|U_b,W,X)}{P(R=1|U_b,W,X)} = E[q(Z,X)|U_b,W,X,R=1]
\end{align*}
which implies the desired result.

By similar derivations as in the proof of \eqref{sens: bd with z}, we have
\begin{align}
    \frac{1}{\Gamma_2} E[Y(0)|u_b,w,x,R=1]\le E[Y(0)|u_b,w,x,R=0] \le \Gamma_2 E[Y(0)|u_b,w,x,R=1]. \label{sens: bd with w}
\end{align}

Then,
{\small \begin{align*}
    E[Y(0)|R=0] & = \frac{1}{P(R=0)} \sum_{u_b,w,x}E[Y(0)|u_b,w,x,R=0]P(u_b,w,x,R=0) \\
    & \le \frac{\Gamma_2}{P(R=0)} \sum_{u_b,w,x}E[Y(0)|u_b,w,x,R=1]P(R=0|u_b,w,x)P(u_b,w,x) \\
    & = \frac{\Gamma_2}{P(R=0)} \sum_{u_b,w,x} E[Y(0)|u_b,w,x,R=1] E[I(R=1)q(Z,X)|u_b,w,x] P(u_b,w,x) \\
    & = \frac{\Gamma_2}{P(R=0)} \sum_{u_b,w,x} E[Y(0)|u_b,w,x,R=1] E[q(Z,X)|u_b,w,x,R=1] P(R=1|u_b,w,x) P(u_b,w,x) \\
    & = \frac{\Gamma_2}{P(R=0)} \sum_{u_b,w,x} E[Y(0)|u_b,w,x,R=1] \bigg\{ \sum_u E[Y(0)q(Z,X)|u,u_b,w,x,R=1] \\
    & P(u|u_b,w,x,R=1)\bigg \}P(R=1|u_b,w,x) P(u_b,w,x) \\
    & = \frac{\Gamma_2}{P(R=0)} \sum_{u_b,w,x} \bigg\{ \sum_u \frac{E[Y(0)|u_b,w,x,R=1]}{E[Y(0)|u,u_b,w,x,R=1]} E[Y(0)q(Z,X)|u,u_b,w,x,R=1]\\
    & P(u|u_b,w,x,R=1)\bigg \} P(R=1|u_b,w,x) P(u_b,w,x)
\end{align*}}

By the assumption that, for any $u, u_b, w, x$,
\begin{align*}
 \frac{1}{\Gamma_2} \le \frac{E[Y(0)|u_b,w,x,R=1]}{E[Y(0)|u,u_b,w,x,R=1]} \le \Gamma_2
\end{align*}
Then,
\begin{align*}
    E[Y(0)|R=0] & = \frac{\Gamma_2}{P(R=0)} \sum_{u_b,w,x} \bigg\{ \sum_u \frac{E[Y(0)|u_b,w,x,R=1]}{E[Y(0)|u,u_b,w,x,R=1]} E[Y(0)q(Z,X)|u,u_b,w,x,R=1]\\
    & P(u|u_b,w,x,R=1)\bigg \} P(R=1|u_b,w,x) P(u_b,w,x) \\
    & \le \frac{\Gamma_2}{P(R=0)} \sum_{u_b,w,x} \bigg\{ \sum_u E[Y(0)q(Z,X)|u,u_b,w,x,R=1]\\
    & P(u|u_b,w,x,R=1)\bigg \} P(R=1|u_b,w,x) P(u_b,w,x)\\
    & = \frac{\Gamma_2^2}{P(R=0)} \sum_{u_b,w,x} E[Y(0)q(Z,X)|u_b,w,x,R=1] P(R=1|u_b,w,x) P(u_b,w,x) \\
    & = \frac{\Gamma_2^2}{P(R=0)} \sum_{u_b,w,x} E[I(R=1)Y(0)q(Z,X)|u_b,w,x] P(u_b,w,x) \\
    & = \frac{\Gamma_2^2}{P(R=0)}  E[I(R=1)Yq(Z,X)] 
\end{align*}

Similarly, we can show that
\begin{align*}
    \frac{1}{\Gamma_2^2} \frac{1}{P(R=0)}E[I(R=1)q(Z,X)Y] \le E[Y(0)|S=0].
\end{align*}

In the above derivations, we highlight the following two assumptions:

1. Let $r(u,u_b,w,x) = \frac{P(u|u_b,w,x,R=0)}{P(u|u_b,w,x,R=1)}$.
\begin{align*}
    r(u,u_b,w,x) & = \frac{P(u,u_b,w,x,R=0)/P(u_b,w,x,R=0)}{P(u,u_b,w,x,R=1)/P(u_b,w,x,R=1)} \\
    & = \frac{P(R=0|u,w,x)/P(R=0|u_b,w,x)}{P(R=1|u,w,x)/P(R=1|u_b,w,x)} \\
    & = OR(P(R=0|u,w,x), P(R=0|u_b,w,x)),
\end{align*}
which is assumed to be bounded between $\frac{1}{\Gamma_2}$ and $\Gamma_2$.

2. Let $\tilde r(u,u_b,w,x) = \frac{E[q(Z,X)|u_b,w,x,R=1]}{E[q(Z,X)|u,u_b,w,x,R=1]}$. We assume 
\begin{align*}
 \frac{1}{\Gamma_2} \le \tilde r(u,u_b,w,x) \le \Gamma_2.
\end{align*}

Note that by the completeness assumption 3*,
\begin{align*}
    E[q(Z,X)|u_b,w,x,R=1] = \frac{P(R=0|u_b,w,x)}{P(R=1|u_b,w,x)}
\end{align*}
If it also holds that 
\begin{align*}
    E[q(Z,X)|u,w,x,R=1] = \frac{P(R=0|u,w,x)}{P(R=1|u,w,x)}
\end{align*}
Then, the above two assumptions can be unified.

Finally, for doubly robust identification, let
\begin{align*}
    \psi = \frac{1}{P(R=0)}E[I(R=1)q(Z,X)(Y - h(W,X))] + \frac{1}{P(R=0)}E[I(R=0)h(W,X)].
\end{align*}
We know:

(1)  if $h(W,X)$ satisfies $E[Y|Z,X,R=1] = E[h(W,X)|Z,X,R=1]$, then
\begin{align*}
    \psi = \frac{1}{P(R=0)}E[I(R=0)h(W,X)]
\end{align*}
(2) if $q(Z,X)$ satisfies $P(R=0|W,X) = E[I(R=1)q(Z,X)|W,X]$, then
\begin{align*}
    \psi = \frac{1}{P(R=0)}E[I(R=1)q(Z,X)Y]
\end{align*}
In the first case,  we have shown
\begin{align*}
   \frac{1}{\Gamma_1^2}\psi \le E[Y(0)|R=0] \le \Gamma_1^2  \psi
\end{align*}
In the second case, we have shown
\begin{align*}
   \frac{1}{\Gamma_2^2}\psi \le E[Y(0)|R=0] \le \Gamma_2^2  \psi
\end{align*}
Therefore, we have
\begin{align*}
   \frac{1}{\min\{\Gamma_1,\Gamma_2\}^2}\psi \le E[Y(0)|R=0] \le \min\{\Gamma_1,\Gamma_2\}^2  \psi
\end{align*}

We conduct a simple simulation study to investigate the consequence of the violation of Assumption \ref{assump: negative controls}$^*$. The data are generated as follows:
\begin{itemize}
    \item $X \sim {\rm Binom}(n, p=0.3)$
    \item $U \sim {\rm Truncnorm}(n,{\rm mean=}0.5, {\rm sd=}1)$
    \item $U_b={\rm I}(U>a)$
    \item $R \sim {\rm Binom}(n,p=1/(1+{\rm exp}(-1.6+5U+0.1X)))$
    \item $W \sim {\rm Binom} (n, p={\rm exp}(-2+2U-0.2X))$
    \item $Z \sim {\rm Binom}(n,p=1/(1+{\rm exp}(1.5U_b+0.5X)))$
\end{itemize}

The mean survival time $T^*$ is 600 months and 4000 months for $R=1$ and $R=0$. The hazard ratios for the true survival time are simulated by $\exp(0.8X+2U)$ and $\exp(0.5X-1.8U)$. The mean censoring time $C$ is 30 months and 50 months for $R=1$ and $R=0$. The hazard ratios for the censoring time are simulated by $\exp(0.2Z+0.2X)$ and $\exp(0.2Z+0.3X)$. The observed time is $T=\min (T^*, C)$. We consider sample sizes being $n=6500$, and vary $a = 0.5, 0.9$ (i.e., $U_b$ being a good or weak proxy of $U$). The table below summarizes the results.

\begin{table}[ht]
\centering
\caption{Comparison of IPCW estimates across varying $a$ in the violation of Assumption \ref{assump: negative controls}$^*$. The total number of simulations is 1000.}\label{tbl: vio_assump1}
\begin{tabular}{ccccc}
\hline
  & \multicolumn{2}{c}{$a=0.5$} & \multicolumn{2}{c}{$a=0.9$} \\
\cline{2-5}
 & Bias & SE & Bias & SE \\
\hline
    Outcome Bridge & 0.013 & 0.236 & 0.024 & 0.411 \\
    Propensity Bridge & 0.010 & 0.234 & 0.021 & 0.411 \\
    Doubly Robust & 0.013 & 0.236 & 0.024 & 0.411 \\
\hline
\end{tabular}
\end{table}

When $a=0.5$, the correlations between $U_b$ and $U$ as well as $U_b$ and $Z$ are stronger since $U_b$ is less deterministic. Hence, when the true data generating process of $R$ depends on continuous $U$ (a violation of Assumption \ref{assump: negative controls}$^*$), we observe smaller bias and SE with a stronger proxy variable $Z$.

\section{When $W$ or $Z$ is not a proxy of $U$}\label{supp:sensi_proxyWZ}

In this section, we aim to investigate the scenario where $Z$ may not be a proxy of $U$ and how this affects the identification result presented in Theorem 1. Specifically, we no longer assume $Z$ satisfies Assumption 3(a) and 4. Instead, we only assume $Z$ satisfies 
\begin{align*}
    Z \indep (U,Y(0),W)|\bm X,R=1.
\end{align*}
This assumption holds if $U$ does not have a direct effect on $Z$.

Under this assumption, at the population level, there no longer exists a square-integrable function $q$ that solves the following observed integral equation
\begin{align}
    \frac{P(R=0|W,X)}{P(R=1|W,X)} = E[q(Z,X)|W,X,R=1]. \label{int: q obs}
\end{align}
because the right hand side equals $E[q(Z,X)|X,R=1]$ which does not vary with $W$. Therefore, the identification formula (4) in Theorem 1 is invalid. In fact, the identification formula (3) in Theorem 1 is likely invalid either, because without assumption 3(a),  the outcome bridge function $h$ that solves the following observed outcome bridge equation may not satisfy the latent outcome bridge equation $E[Y(0)|U,X,R=1] = E[h(W,X)|U,X,R=1]$, a key result for controlling for unmeasured confounding. 
\begin{align}
    E[Y(0)|Z,X,R=1] = E[h(W,X)|Z,X,R=1]. \label{int: h obs}
\end{align}
This indicates that when $Z$ is not a proxy of $U$, $E[Y(0)|R=0]$ may be even unidentifiable.

However, in practice, one may not know whether or not $Z$ is a good enough proxy of $U$ such that Assumption 3(a) and 4 hold, and may proceed to use the identification formula (4) in Theorem 1 for analysis. That is, one may still estimate $q$ by solving the observed integral equation \ref{int: q obs} and then plug it into the identification formula (4) to obtain an estimate of $E[Y(0)|R=0]$. Note that although the observed integral equation in \ref{int: q obs} does not have a solution at the population level, it could still admit a solution in finite-sample, e.g., when $W$, $Z$ and $X$ are categorical and $q$ is a saturated model. What would the result based on the identification formula (4) converge to? We perform the following finite-sample analysis, assuming all variables are categorical.
{\small \begin{align*}
    E_n[I(R=1)\hat q(Z,X)Y]/\hat P(R=0) & = E_n[E_n[I(R=1)Y\hat q(Z,X)|W,X]]/\hat P(R=0) \\
    & = E_n[E_n[Y\hat q(Z,X)|W,X,R=1]\hat P(R=1|W,X)]/\hat P(R=0) \\
    & = E_n[E_n[Y\hat q(Z,X)|W,X,R=1]\hat P(R=1|W,X)]/\hat P(R=0) \\
    & \approx E_n[E_n[Y|W,X,R=1]E_n[q(Z,X)|W,X,R=1]\hat P(R=1|W,X)]/\hat P(R=0) \\
    & = E_n[E_n[Y|W,X,R=1]E_n[q(Z,X)|W,X,R=1]\hat P(R=1|W,X)]/\hat P(R=0) \\
    & = E_n[E_n[Y|W,X,R=1]\hat P(R=0|W,X)]/\hat P(R=0) \\
    & = E_n[I(R=0)E_n[Y|W,X,R=1]]/\hat P(R=0)
\end{align*}}
where the approximation on the fourth line is due to independence of $Z$ with $Y$ given $X$ may not hold exactly in a finite sample and the fifth equality is because we assume $q$ solves the observed integral equation in a finite sample exactly. This non-rigorous analysis indicates that $E_n[I(R=1)\hat q(Z,X)Y]/\hat P(R=0)$ would converge to $E[I(R=0)E[Y|W,X,R=1]]/P(R=0)$. It is easy to show that $E[I(R=0)E[Y|W,X,R=1]]/P(R=0) \neq E[Y(0)|R=0]$, unless $E[Y(0)|W,X,R=1] = E[Y(0)|W,X,R=0]$ almost surely. 

In finite samples, based on the above derivation, we argue that on average,  $E_n[I(R=1)\hat q(Z,X)Y]/\hat P(R=0)$ would approximate $E_n[I(R=0)E_n[Y|W,X,R=1]]/\hat P(R=0)$, as long as $\hat q$ solves the following finite-sample integral equation exactly:
\begin{align*}
    \frac{\hat P(R=0|W,X)}{\hat P(R=1|W,X)} = E_n[q(Z,X)|W,X,R=1].
\end{align*}
$E_n[I(R=0)E_n[Y|W,X,R=1]]/\hat P(R=0)$ is not consistent for $E[Y(0)|R=0]$ with the magnitude of bias depending on how close $E[Y(0)|W,X,R=1]$ is to $E[Y(0)|W,X,R=0]$, or in other words how strong $W$ is as a proxy of $U$. Furthermore, because $q$ is actually not identified, we expect $\hat q$ to only capture noise and exhibit large instability, which renders the plug-in estimator $E_n[I(R=1)\hat q(Z,X)Y]/\hat P(R=0)$ unstable.

To further confirm these observations, we conduct a simulation study. The data are generated as follows:
\begin{itemize}
    \item $X \sim {\rm Binom}(n,0.5)$
    \item $U \sim {\rm Binom}(n,0.5)$
    \item $R \sim {\rm Binom}(n,1/(1+{\rm exp}(-0.3+1.5U+0.5X)))$
    \item $W \sim {\rm Binom} (n, {\rm exp}(-w_u+w_uU-0.2X))$
    \item $Z \sim {\rm Binom}(n,1/(1+{\rm exp}(-0.5+0.5X)))$
\end{itemize}

We simulate the true survival using a similar approach as in the primary simulation study, but we simplify it by assuming there is no censoring. The mean survival time $T^*$ is 600 months and 4000 months for $R=1$ and $R= 0$. The hazard ratios for the true survival time are simulated by $exp(-1+0.8X+2U)$ and $exp(-1+0.5X+U)$. We obtain the observed 1-year HIV-free outcome by $Y=I(T^*>12)$. We consider sample sizes being 1e4, 1e5, 1e6, and vary $w_u = 0.5, 2, 8$ (i.e., $W$ being a  weak, moderate and strong proxy of $U$). In this setting, the target parameter $E[Y(0)|R=0] = 0.959$. The table below summarizes the results.

\begin{table}[!htbp]
\centering
\caption{Comparison of two estimators across varying $\lambda$ and sample sizes $n$. SD is calculated by the $97.5$-th quantile minus $2.5$-th quantile divided by $2\times 1.96$. 5000 simulations.}\label{tbl: z not proxy}
\begin{tabular}{cc|cc|cc}
\hline
 &  & \multicolumn{2}{c}{$E_n[I(R=1)\hat q(Z,X)Y]/\hat P(R=0)$} & \multicolumn{2}{c}{$E_n[I(R=0)E_n[Y|W,X,R=1]]/\hat P(R=0)$} \\
\cline{3-6}
$w_u$ & $N$ & Median & SD & Median & SD \\
\hline
0.5 & $10^4$ & 0.977 & 0.249 & 0.976 & 0.003 \\
    & $10^5$ & 0.977 & 0.235 & 0.976 & 0.001 \\
    & $10^6$ & 0.977 & 0.284 & 0.976 & 0.001 \\
\hline
2   & $10^4$ & 0.972 & 0.543 & 0.969 & 0.004 \\
    & $10^5$ & 0.967 & 0.535 & 0.968 & 0.002 \\
    & $10^6$ & 0.970 & 0.556 & 0.968 & 0.001 \\
\hline
8   & $10^4$ & 0.964 & 0.627 & 0.963 & 0.005 \\
    & $10^5$ & 0.961 & 0.630 & 0.963 & 0.002 \\
    & $10^6$ & 0.962 & 0.726 & 0.963 & 0.002 \\
\hline
\end{tabular}

\end{table}

Observations: (1) the bias in the estimator $E_n[I(R=0)E_n[Y|W,X,R=1]]/\hat P(R=0)$ decreases with $w_u$, which agrees with our analysis. (2) The estimator $E_n[I(R=1)\hat q(Z,X)Y]/\hat P(R=0)$ provides estimates on average close to that from $E_n[I(R=0)E_n[Y|W,X,R=1]]$, but the former has much larger variance due to instability of $\hat q$.

Next, we perform a similar analysis, assuming $W$ is not a proxy of $U$. In this analysis, we assume $W$ satisfies
\begin{align*}
    W \indep (U,R,Z)|X.
\end{align*}
When this assumption holds, at the population level, there no longer exists a square integrable function $h$ that solves the observed integral equation
\begin{align}
    E[Y|Z,X,R=1] = E[h(W,X)|Z,X,R=1], \label{int: h obs}
\end{align}
because the right hand side equals $E[h(W,X)|X,R=1]$ which does not vary with $Z$. Therefore, the identification formula (3) in Theorem 1 in invalid. If one proceed with the plug-in estimator based on this identification formula, and solve $\hat h$ using the above integral equation in finite samples. Then, we can show that
\begin{align*}
    E_n[\hat h(W,X)|R=0] & = E_n[E_n[\hat h(W,X)|X,R=0]|R=0] \\
    & \approx E_n[E_n[\hat h(W,X)|X,R=1]|R=0] \\
    & = E_n[E_n[E_n[\hat h(W,X)|Z,X,R=1]|X,R=1]|R=0] \\
    & = E_n[E_n[E_n[Y(0)|Z,X,R=1]|X,R=1]|R=0] \\
    & = E_n[E_n[Y(0)|X,R=1]|R=0]
\end{align*}
which does not equal to $E[Y(0)|R=0]$ unless $E[Y(1)|X,R=1]=  E[Y(1)|X,R=0]$.
Similar to our comments above, this derivation is in fact invalid because there does not exist $h$ that solves \ref{int: h obs} at the population level. However, this analysis suggests, in finite sample, if there exists $\hat h$ that solves $\ref{int: h obs}$ exactly in finite sample, then on avearge, we expect $E_n[\hat h(W,X)|R=0]$ equals $E_n[E_n[Y|X,R=1]|R=0]$, and $E_n[E_n[Y|X,R=1]|R=0]$ is biased for $E[Y(0)|R=0]$ with the magnitude of bias depending on how close $E[Y(1)|X,R=1]$ is to $E[Y(1)|X,R=0]$.

To confirm this, we conduct a simulation study. The data are generated as follows:
\begin{itemize}
    \item $X \sim {\rm Binom}(n,0.5)$
    \item $U \sim {\rm Binom}(n,0.5)$
    \item $R \sim {\rm Binom}(n,1/(1+{\rm exp}(-0.3+1.5U+0.5X)))$
    \item $W \sim {\rm Binom} (n, {\rm exp}(-0.5-0.5X))$
    \item $Z \sim {\rm Binom}(n,1/(1+{\rm exp}(-z_u/2+z_uU-0.2X)))$
\end{itemize}

The outcome is simulated similarly as in Table \ref{tbl: z not proxy}. We consider sample sizes being 1e4, 1e5, 1e6, and vary $z_u = 0.5, 2, 8$ (i.e., $Z$ being a  weak, moderate and strong proxy of $U$). In this setting, the target parameter $E[Y(0)|R=0] = 0.959$. The table below summarizes the results.

\begin{table}[!htbp]\label{tbl: w not proxy}
\centering
\caption{Comparison of two estimators across varying $\lambda$ and sample sizes $n$. SD is calculated by the $97.5$-th quantile minus $2.5$-th quantile divided by $2\times 1.96$. 5000 simulations.}
\begin{tabular}{cc|cc|cc}
\hline
 &  & \multicolumn{2}{c}{$E_n[I(R=1)\hat q(Z,X)Y]/\hat P(R=0)$} & \multicolumn{2}{c}{$E_n[I(R=0)E_n[Y|W,X,R=1]]/\hat P(R=0)$} \\
\cline{3-6}
$w_u$ & $N$ & Median & SD & Median & SD \\
\hline
0.5 & $10^4$ & 0.977 & 0.249 & 0.976 & 0.003 \\
    & $10^5$ & 0.977 & 0.235 & 0.976 & 0.001 \\
    & $10^6$ & 0.977 & 0.284 & 0.976 & 0.001 \\
\hline
2   & $10^4$ & 0.972 & 0.543 & 0.969 & 0.004 \\
    & $10^5$ & 0.967 & 0.535 & 0.968 & 0.002 \\
    & $10^6$ & 0.970 & 0.556 & 0.968 & 0.001 \\
\hline
8   & $10^4$ & 0.964 & 0.627 & 0.963 & 0.005 \\
    & $10^5$ & 0.961 & 0.630 & 0.963 & 0.002 \\
    & $10^6$ & 0.962 & 0.726 & 0.963 & 0.002 \\
\hline
\end{tabular}

\end{table}

Observations: (1) the bias in the estimator $E_n[E_n[Y(0)|X,R=1]|R=0]$ does not change with $z_u$, which agrees with our analysis. (2) The estimator $E_n[\hat h(W,X)|R=0]$ provides estimates on average close to that from $E_n[E_n[Y(0)|X,R=1]|R=0]$, but the former has much larger variance due to instability of $\hat h$.

These analyses underscore the importance of assessing the strength of proxy associations in practice. Notably, unlike in instrumental variable analysis—where estimators tend to be unstable when instruments are weak—proximal causal inference may remain stable with weak proxies, particularly in small samples. However, without evaluating the strength of proxies, users who apply proximal methods may be misled by seemingly precise estimates that are, in fact, biased and increasingly unstable as sample size grows.

\section{When $h$ and $q$ solve observed integral equations exactly}\label{supp:equiv_hq}

When $W,Z,X$ are all categorical variables, like in our application, it is possibly that $\hat h$ and $\hat q$ solve the observed integral equation exactly, especially when $h$ and $q$ are estimated by some saturated models. That is, we have 
\begin{align*}
    \frac{\hat P(R=0|W,X)}{\hat P(R=1|W,X)} = E_n[q(Z,X)|W,X,R=1]
\end{align*}
and
\begin{align*}
    E_n[Y|Z,X,R=1] = E_n[\hat h(W,X)|Z,X,R=1].
\end{align*}
In this case, our observation is that the plug-in estimators using the identification formula in (3) and (4) of Theorem 1 coincide. This can be justified by the following derivations.
\begin{align*}
    E[I(R=1)Yq(Z,X)] & = E[q(Z,X)E[I(R=1)Y|Z,X]] \\
    & = E[q(Z,X)E[Y|Z,X,R=1]P(R=1|Z,X)] \\
    & = E[q(Z,X,)P(R=1|Z,X)E[h(W,X)|Z,X,R=1]] \\
    & = E[q(Z,X)E[I(R=1)h(W,X)|Z,X]] \\
    & = E[I(R=1)h(W,X)q(Z,X)] \\
    & = E[h(W,X)E[I(R=1)q(Z,X)|W,X]] \\
    & = E[h(W,X)P(R=1|W,X)E[q(Z,X)|W,X,R=1]]\\
    & = E[h(W,X)P(R=0|W,X)] \\
    & = E[h(W,X)I(R=0)]
\end{align*}
This derivation holds in finite samples by replacing population expectation $E$ with sample expectation.

\section{Parametric Two-stage Regression with Binary $U$}\label{supp:twostage_binU}

We start with the model
    \begin{align}
        \lambda(t|R,Z,X,U) = \lambda_0(t)exp(\beta_{Tx}X + \beta_{Tu} U) \nonumber
    \end{align}
    where $U \in \{0, 1\}$ is binary.
    
    When the event rate is low, we have $S(t|R,U,X) \approx 1$.
    
    Then, the density function $f(t|R,U,X) \approx \lambda(t|R,U,X)$
    
    Note that
    \begin{align*}
        f(t|R,Z,X) & = E[f(t|R,U,X,Z)|R,Z,X] \\
        & = E[f(t|U,X)|R,Z,X] \\
        & \approx E[\lambda(t|U,X)|R,Z,X] \\
        & = \lambda_0(t) exp(\beta_{Tx}X) E[exp(\beta_{Tu}U)|R,Z,X]
    \end{align*}

    Since $U$ is binary, we know
    \begin{align*}
        E[exp(\beta_{T_u} U) | R,Z,X] & = exp(\beta_{T_u})P(U=1|R,Z,X) + P(U=0|R,Z,X) \\
        & = (exp(\beta_{T_u}) - 1)P(U=1|R,Z,X) + 1
    \end{align*}

    It remains to identify $P(U=1|R,Z,X)$. Assume a log-linear model for binary $W$:
    \begin{align}
        P(W=1|R,Z,U,X) =  exp(\beta_{W0} + \beta_{Wx}X + \beta_{Wu}U) \nonumber
    \end{align}

    Then,
    {\small \begin{align*}
        P(W=1|R,Z,X) & = P(W=1|R,Z,X,U=1)P(U=1|R,Z,X) + P(W=1|R,Z,X,U=0)P(U=0|R,Z,X) \\
        & = exp(\beta_{W0} + \beta_{Wx}X + \beta_{Wu}) P(U=1|R,Z,X) + exp(\beta_{W0} + \beta_{Wx}X)(1-P(U=1|R,Z,X)) \\
        & = exp(\beta_{W0} + \beta_{Wx}X) \bigg[ (exp(\beta_{Wu}) - 1) P(U=1|R,Z,X) + 1\bigg]
    \end{align*}}

    This implies that
    \begin{align*}
        P(U=1|R,Z,X) = \bigg\{\frac{P(W=1|R,Z,X)}{exp(\beta_{W0} + \beta_{Wx}X)} - 1 \bigg\}\frac{1}{(exp(\beta_{Wu}) - 1)}
    \end{align*}

    Then, we have
    \begin{align*}
        E[exp(\beta_{T_u} U) | R,Z,X] & = (exp(\beta_{T_u}) - 1)P(U=1|R,Z,X) + 1 \\
        & = \bigg\{\frac{P(W=1|R,Z,X)}{exp(\beta_{W0} + \beta_{Wx}X)} - 1 \bigg\}\frac{exp(\beta_{T_u}) - 1}{exp(\beta_{Wu}) - 1} + 1 \\
        & = \frac{P(W=1|R,Z,X)}{exp(\beta_{W0} + \beta_{Wx}X)}\frac{exp(\beta_{T_u}) - 1}{exp(\beta_{Wu}) - 1} - \frac{exp(\beta_{T_u}) - 1}{exp(\beta_{Wu}) - 1}  + 1 \\
        & = \frac{P(W=1|R,Z,X)}{exp(\beta_{W0} + \beta_{Wx}X)}\frac{exp(\beta_{T_u}) - 1}{exp(\beta_{Wu}) - 1} + \frac{exp(\beta_{Wu}) - exp(\beta_{T_u})}{exp(\beta_{Wu}) - 1}
    \end{align*}

    Then,
    \begin{align*}
        f(t|R,Z,X) & \approx \lambda_0(t) exp(\beta_{Tx}X) E[exp(\beta_{Tu}U)|R,Z,X] \\
        & = \lambda_0(t) exp(\beta_{Tx}X) (\frac{P(W=1|R,Z,X)}{exp(\beta_{W0} + \beta_{Wx}X)}\frac{exp(\beta_{T_u}) - 1}{exp(\beta_{Wu}) - 1} + \frac{exp(\beta_{Wu}) - exp(\beta_{T_u})}{exp(\beta_{Wu}) - 1}) \\
        & = \lambda_0(t) exp(\beta_{Tx}X)(\alpha \frac{P(W=1|R,Z,X)}{exp(\beta_{W0} + \beta_{Wx}X)} + 1-\alpha)
    \end{align*}
    where $\alpha = \frac{exp(\beta_{Tu}) - 1}{exp(\beta_{Wu}) - 1}$.

\section{Simulation}\label{supp:sim}

We examine the finite sample performance of the semiparametric IPCW approach and the parametric two-stage approach through a simulation study.
We first simulate two binary covariates with the following distribution $X_1\sim {\rm Binom}(n, p=0.7)$ and $X_2\sim {\rm Binom}(n,p=0.5)$. We simulate a continuous $U$ where $U$ follows a truncated normal distribution with $E(U)=0.5$, $Var(U)=1$ and $U\in[0,1]$. Then, we generate a binary $U_b=I(U>0.5)$. The binary negative control outcome $W$ follows a log-linear model where $P(W=1)=exp(\beta_{W0}+\beta_WU+0.2X_1+0.5X_2)$. We simulate binary $R$ and $Z$ with $P(R=1)=1/(1+exp(-1.2+2U_b+0.3X_1+0.2X_2))$ and $P(Z=1)=1/(1+exp(\beta_{Z0}+\beta_ZU_b+0.5X_1-0.2X_2))$. We simulate the true survival and censoring time, $T^*(0)$, $T^*(1)$, $C(0)$ and $C(1)$, with the (mean survival time, mean censoring time) being (2000, 30) days and (6000, 50) days for data with $R=1$ and $R=0$ respectively. The hazard ratios for the true survival time are simulated by $exp(-1+0.8X_1+0.3X_2-2U)$ and $exp(-1+0.5X_1-0.2X_2-1.8U)$ for data with $R=1$ and $R=0$. Similarly, the hazard ratios for the true censoring time are simulated by $exp(-2.5+0.2Z+0.2X_1-0.1X_2)$ and $exp(-2.5+0.2Z+0.3X_1+0.1X_2)$ for data with $R=1$ and $R=0$ respectively. The observed time is $T=R\cdot \min (T^*(1), C(1))+(1-R)\cdot \min (T^*(0), C(0))$ with the event indicator $\Delta=I(T^*\leq C)$. Our estimand is the counterfactual 1-year HIV-free probability $P(T^*(0)>365\mid R=0)$ (equivalently, one minus the incidence rate). Note that this data-generating process satisfies all assumptions in Sections \ref{sec:semiparametric}  and \ref{sec:twostage}.  Our simulation study can then be summarized by the factorial design below. 

Factor 1 (sample size): (1) $n=6500$; (2) $n=10000$; 

Factor 2 (Strengths of negative control variables $W,Z$ as a proxy to unmeasured confounder; \{($\beta_{W0}$, $\beta_W$), ($\beta_{Z0}$, $\beta_Z$)\}: (1) \{medium, medium\}: $\{(-4,3), (-1.2,1)\}$; (2) \{medium, high\}:  $\{(-4,3), (-1.2,2)\}$; (3) \{high, medium\}:  $\{(-4.7,4), (-1.2,1)\}$; (4) \{high, high\}:  $\{(-4.7,4), (-1.2,2)\}$. 

Factor 3 (Estimation methods of counterfactual HIV incidence under placebo arm): (1) oracle estimator adjusting for $(X, U)$; (2) naive method adjusting for $X$; (3) naive method adjusting for $(X, Z, W)$;  (4) semiparametric IPCW using outcome bridge, propensity bridge, and doubly robust estimators; (5) regression-based two-stage approaches. 

Across all data-generating processes, we have $P(R=1)\approx 0.33$, $P(W=1)\approx 0.15$, $P(Z=1)\approx 0.50$, which are similar to our real data example. The true counterfactual HIV cumulative incidence through 1 year is $0.035$ across all settings. Consistent with the real-data analysis, estimations and confidence intervals are obtained using the \texttt{gmm} package in \textsf{R} for all methods. We also examined our statistical inference method with power and type I error rate as described in Section \ref{sec:inference}.

The results are in Table \ref{tab:n6500} and Table \ref{tab:n10000} with $3000$ simulation runs. We summarize (i) the mean of the estimated counterfactual HIV cumulative incidence through 1 year on the $\log(-\log(\cdot))$ scale; (ii) the transformed counterfactual HIV cumulative incidence through 1 year; (iii) the standard deviation of the estimated counterfactual HIV cumulative incidence through 1 year on the $\log(-\log(\cdot))$ scale; (iv) the median of the standard error; (v) empirical coverage of the 95\% confidence interval; (vi) empirical power when estimating $P(T^*(1)>365\mid R=0)-P(T^*(0)>365\mid R=0)$ under the alternative; (vii) empirical type-I error rate when estimating when estimating $P(T^*(1)>365\mid R=0)-P(T^*(0)>365\mid R=0)$ under the null. The IPCW estimates might exceed the $[0,100]$ plausible range of the estimated cumulative incidence, which is considered as a limitation of the IPCW approach in Table \ref{table:comparison}. The results reported for the IPCW approach are obtained across the simulation runs that are within the $[0,100]$ range, and we report the proportion of the excluded invalid estimates in the last column.

\begin{table}[!htbp]
\caption{Simulation result for Factor 1: $n=6500$.}\label{tab:n6500}
\resizebox{\textwidth}{!}{
\begin{tabular}{ccccccccccc}
\toprule
\textbf{Factor 2 }           & \textbf{Method} & \textbf{Specification} & \textbf{Estimate} & \textbf{Est. Incidence} & \textbf{SD} & \textbf{Med. SE} & \textbf{Coverage} & \textbf{Power} & \textbf{Type I Error} & \textbf{Prop\_ex} \\
\midrule
\multirow{7}{*}{medium W, medium Z}   & Oracle & Adjust for X,U   & 1.212   & 0.035  & 0.034       & 0.034     & 0.944  & 1  & 0.061  & -   \\ & Naïve           & Adjust for X    & 1.084   & 0.052 & 0.025       & 0.025     & 0.002     & 1              & 0.938       & -     \\    & Naïve           & Adjust for X,Z,W       & 1.101             & 0.049      & 0.026       & 0.026              & 0.019    & 1   & 0.894     & -                  \\      & IPCW    & Outcome bridge      & 1.207         & 0.035    & 0.208       & 0.205       & 0.954       & 0.515          & 0.047         & 0.119   \\        & IPCW            & Propensity bridge      & 1.206             & 0.035        & 0.206       & 0.205              & 0.953             & 0.518          & 0.048      & 0.119      \\          & IPCW            & Doubly robust          & 1.207             & 0.035         & 0.208       & 0.205              & 0.954             & 0.516          & 0.047         & 0.119              \\   & Two-stage       & -      & 1.204             & 0.036        & 0.128       & 0.121         & 0.973             & 0.492          & 0.003                 & -       \\\midrule   \multirow{7}{*}{medium W, high Z}  & Oracle    & Adjust for X,U         & 1.212        & 0.035              & 0.033       & 0.033         & 0.942     & 1              & 0.061         & -      \\       & Naïve     & Adjust for X   & 1.086             & 0.052        & 0.025       & 0.025      & 0.003      & 1       & 0.938      & -      \\      & Naïve     & Adjust for X,Z,W       & 1.112     & 0.048   & 0.027       & 0.027              & 0.054      & 1    & 0.844       & -    \\    & IPCW     & Outcome bridge   & 1.229     & 0.033    & 0.143       & 0.118        & 0.958     & 0.726  & 0.043       & 0.013     \\      & IPCW     & Propensity bridge      & 1.229       & 0.033      & 0.142       & 0.117     & 0.958  & 0.726          & 0.045     & 0.012     \\     & IPCW     & Doubly robust    & 1.229   & 0.033      & 0.143       & 0.118   & 0.958    & 0.726    & 0.045     & 0.013     \\    & Two-stage       & -      & 1.187     & 0.038    & 0.061       & 0.059   & 0.940             & 0.887          & 0.013    & -     \\ \midrule \multirow{7}{*}{high W, medium Z}  & Oracle          & Adjust for X,U         & 1.212 & 0.035         & 0.034       & 0.034      & 0.944             & 1    & 0.061    & -    \\     & Naïve  & Adjust for X    & 1.084   & 0.052         & 0.025       & 0.025     & 0.002      & 1      & 0.938   & -    \\    & Naïve   & Adjust for X,Z,W       & 1.109    & 0.048      & 0.026       & 0.026       & 0.036  & 1    & 0.842    & -       \\    & IPCW  & Outcome bridge      & 1.214      & 0.034         & 0.213       & 0.204        & 0.952   & 0.509          & 0.048      & 0.106     \\      & IPCW    & Propensity bridge      & 1.216      & 0.034    & 0.215       & 0.203    & 0.952       & 0.508          & 0.050      & 0.104         \\     & IPCW   & Doubly robust   & 1.214    & 0.034      & 0.213       & 0.204      & 0.952    & 0.508   & 0.051   & 0.106     \\   & Two-stage   & -     & 1.185    & 0.038   & 0.060       & 0.122    & 0.969   & 0.473      & 0.002    & -     \\ \midrule  \multirow{7}{*}{high W, high Z} & Oracle      & Adjust for X,U    & 1.212    & 0.035   & 0.033       & 0.033     & 0.942     & 1     & 0.061   & -     \\    & Naïve      & Adjust for X   & 1.086      & 0.052    & 0.025       & 0.025     & 0.003      & 1      & 0.938     & -    \\     & Naïve     & Adjust for X,Z,W   & 1.119      & 0.047  & 0.027       & 0.027       & 0.087    & 1      & 0.789     & -    \\      & IPCW   & Outcome bridge    & 1.227     & 0.033     & 0.136       & 0.117       & 0.962    & 0.742  & 0.037     & 0.010    \\    & IPCW      & Propensity bridge      & 1.228    & 0.033    & 0.137       & 0.117      & 0.962    & 0.741   & 0.041    & 0.010      \\  & IPCW      & Doubly robust    & 1.227   & 0.033       & 0.136       & 0.117      & 0.962      & 0.741    & 0.041    & 0.010    \\      & Two-stage       & -   & 1.196    & 0.037        & 0.101       & 0.060    & 0.942      & 0.870     & 0.012     & -  \\  \bottomrule
\end{tabular}}
\begin{tablenotes}
\small
\item IPCW = inverse probability of censoring weighted; $X$ = baseline covariates; $Z$ = NCE ; $W$ = NCO.
\end{tablenotes}
\end{table}

\begin{table}[!htbp]
\caption{Simulation result for Factor 1: $n=10000$.}\label{tab:n10000}
\resizebox{\textwidth}{!}{
\begin{tabular}{ccccccccccc}
\toprule
\textbf{Factor 2} & \textbf{Method} & \textbf{Specification} & \textbf{Estimate} & 
\textbf{Est. Incidence} & \textbf{SD} & \textbf{Med. SE} & 
\textbf{Coverage} & \textbf{Power} & \textbf{Type I Error} & \textbf{Prop\_ex} \\
\midrule

\multirow{7}{*}{medium W, medium Z}  
& Oracle  & Adjust for X,U     & 1.212 & 0.035 & 0.027 & 0.027 & 0.948 & 1     & 0.040 & - \\
& Naïve   & Adjust for X       & 1.084 & 0.052 & 0.020 & 0.020 & 0.000 & 1     & 0.995 & - \\
& Naïve   & Adjust for X,Z,W   & 1.101 & 0.049 & 0.021 & 0.021 & 0.002 & 1.000 & 0.981 & - \\
& IPCW    & Outcome bridge     & 1.221 & 0.034 & 0.182 & 0.167 & 0.949 & 0.579 & 0.052 & 0.068 \\
& IPCW    & Propensity bridge  & 1.221 & 0.034 & 0.193 & 0.166 & 0.949 & 0.580 & 0.054 & 0.067 \\
& IPCW    & Doubly robust      & 1.221 & 0.034 & 0.182 & 0.167 & 0.948 & 0.578 & 0.054 & 0.068 \\
& Two-stage & -                & 1.185 & 0.038 & 0.050 & 0.096 & 0.963 & 0.625 & 0.003 & - \\
\midrule

\multirow{7}{*}{medium W, high Z}  
& Oracle  & Adjust for X,U     & 1.212 & 0.035 & 0.027 & 0.027 & 0.951 & 1     & 0.048 & - \\
& Naïve   & Adjust for X       & 1.084 & 0.052 & 0.020 & 0.020 & 0.000 & 1     & 0.998 & - \\
& Naïve   & Adjust for X,Z,W   & 1.113 & 0.048 & 0.021 & 0.022 & 0.007 & 1     & 0.956 & - \\
& IPCW    & Outcome bridge     & 1.221 & 0.034 & 0.107 & 0.095 & 0.958 & 0.847 & 0.037 & 0.002 \\
& IPCW    & Propensity bridge  & 1.220 & 0.034 & 0.107 & 0.095 & 0.958 & 0.848 & 0.040 & 0.002 \\
& IPCW    & Doubly robust      & 1.221 & 0.034 & 0.107 & 0.095 & 0.958 & 0.847 & 0.040 & 0.002 \\
& Two-stage & -                & 1.199 & 0.036 & 0.099 & 0.048 & 0.936 & 0.938 & 0.017 & - \\
\midrule

\multirow{7}{*}{high W, medium Z}  
& Oracle  & Adjust for X,U     & 1.212 & 0.035 & 0.027 & 0.027 & 0.948 & 1     & 0.040 & - \\
& Naïve   & Adjust for X       & 1.084 & 0.052 & 0.020 & 0.020 & 0.000 & 1     & 0.995 & - \\
& Naïve   & Adjust for X,Z,W   & 1.109 & 0.048 & 0.021 & 0.021 & 0.003 & 1.000 & 0.960 & - \\
& IPCW    & Outcome bridge     & 1.227 & 0.033 & 0.187 & 0.168 & 0.951 & 0.570 & 0.045 & 0.056 \\
& IPCW    & Propensity bridge  & 1.227 & 0.033 & 0.187 & 0.168 & 0.951 & 0.571 & 0.051 & 0.056 \\
& IPCW    & Doubly robust      & 1.227 & 0.033 & 0.187 & 0.168 & 0.951 & 0.570 & 0.051 & 0.056 \\
& Two-stage & -                & 1.199 & 0.036 & 0.099 & 0.098 & 0.961 & 0.604 & 0.004 & - \\
\midrule

\multirow{7}{*}{high W, high Z}  
& Oracle  & Adjust for X,U     & 1.212 & 0.035 & 0.027 & 0.027 & 0.951 & 1     & 0.048 & - \\
& Naïve   & Adjust for X       & 1.084 & 0.052 & 0.020 & 0.020 & 0.000 & 1     & 0.998 & - \\
& Naïve   & Adjust for X,Z,W   & 1.119 & 0.047 & 0.021 & 0.022 & 0.020 & 1     & 0.935 & - \\
& IPCW    & Outcome bridge     & 1.220 & 0.034 & 0.104 & 0.094 & 0.962 & 0.852 & 0.036 & 0.002 \\
& IPCW    & Propensity bridge  & 1.220 & 0.034 & 0.105 & 0.094 & 0.962 & 0.854 & 0.040 & 0.001 \\
& IPCW    & Doubly robust      & 1.220 & 0.034 & 0.104 & 0.094 & 0.962 & 0.852 & 0.040 & 0.002 \\
& Two-stage & -                & 1.184 & 0.038 & 0.049 & 0.047 & 0.938 & 0.913 & 0.019 & - \\
\bottomrule

\end{tabular}}
\begin{tablenotes}
\small
\item IPCW = inverse probability of censoring weighted; $X$ = baseline covariates; $Z$ = NCE ; $W$ = NCO.
\end{tablenotes}
\end{table}

Observations:
\begin{enumerate}
    \item All IPCW and two-stage estimates are unbiased, while estimates from naïve regressions are biased. Additionally adjusting for $Z,W$ in naive regressions reduce the bias.
    \item Regression-based two-stage estimates have smaller standard deviations and standard errors on average compared to IPCW estimates. 
    \item We observe smaller standard deviations and higher power for both IPCW and two-stage regression methods as sample size increases.
    \item The empirical coverages are approximately at the nominal 95\% and the type I error is controlled at 0.05 for both IPCW and two-stage regression across all settings.
    \item Using negative control variables $W,Z$ that are stronger proxies of the unmeasured confounder returns smaller standard deviations on average for all IPCW and two-stage regression estimates.
\end{enumerate}

\end{document}